
\documentclass[a4paper,UKenglish,10pt]{scrartcl}
 
\usepackage{microtype}
\usepackage{amsthm}
\usepackage{amssymb}
\usepackage{amsmath}
\usepackage{ stmaryrd }
\usepackage{ucs,algorithmic,algorithm}
\usepackage{complexity}


\bibliographystyle{plain}

\title{The arithmetic complexity of tensor contractions}

%


\newtheorem{theorem}{Theorem}
\newtheorem{proposition}[theorem]{Proposition}
\newtheorem{observation}[theorem]{Observation}

\newtheorem{definition}[theorem]{Definition}
\newtheorem{corollary}[theorem]{Corollary}
\newtheorem{lemma}[theorem]{Lemma}
\newtheorem{remark}[theorem]{Remark}

\newcommand{\calT}{{\mathcal{T}}}

\newcommand{\lab}[1]{\mathsf{label}({#1})}
\newcommand{\leaf}[1]{\mathsf{leaf}({#1})}
\newcommand{\maxorder}[1]{\mathsf{maxorder}({#1})}

\newcommand{\VPw}{\VP_{\mathrm{ws}}}

\newcommand{\set}[1]{[#1]}
\newcommand{\tuple}[3][1]{#2_{#1}, \ldots, #2_{#3}}
 
\newcommand{\N}{\mathbb{N}}
\renewcommand{\K}{\mathbb{K}} 
\newcommand{\e}{{\bf e}}
\newcommand{\f}{{\bf f}}

\newcommand{\sform}{\{*\}\text{-formula}}
\newcommand{\sijform}{\{*_{i,j}\}\text{-formula}}

\DeclareMathOperator{\maxdim}{\mathop{maxdim}}

\newcommand{\BIGOP}[1]{\mathop{\mathchoice%
{\raise-0.22em\hbox{\huge $#1$}}%
{\raise-0.05em\hbox{\Large $#1$}}{\hbox{\large $#1$}}{#1}}}


\begin{document}

\author{
Florent Capelli\\
ENS Lyon\\
Lyon, France\\
{\small \texttt{florent.capelli@ens-lyon.fr}}\\
\and
Arnaud Durand\\ 
IMJ UMR 7586  -  Logique\\
Universit\'{e} Paris Diderot\\
F-75205 Paris, France  \\
 {\small \texttt{durand@logique.jussieu.fr}}
\and
Stefan Mengel\thanks{Partially supported by DFG grants BU 1371/2-2 and BU 1371/3-1 and the Initial Training Network in Mathematical Logic MALOA PITN-GA-2009-238381.}\\Institute of Mathematics\\ University of Paderborn\\ D-33098 Paderborn, Germany\\ {\small\texttt{smengel@mail.uni-paderborn.de}} 
}

\maketitle

\begin{abstract}
We investigate the algebraic complexity of tensor calulus. We consider a generalization of iterated matrix product to tensors and show that the resulting formulas exactly capture $\VP$, the class of polynomial families efficiently computable by arithmetic circuits. This gives a natural and robust characterization of this complexity class that despite its naturalness is not very well understood so far.
\end{abstract}

\section{Introduction}

The question of which polynomials can be computed by polynomial size arithmetic circuits is one of the central questions of algebraic complexity. It was first brought up explicitly by Valiant \cite{val79} who formulated a complexity theory in this setting with its own complexity classes and notions of completeness. Efficient computation in Valiant's model is formalized by the complexity class $\VP$ which consists of families of polynomials that can be computed by arithmetic circuits of polynomial size. Despite recent efforts relating $\VP$ to logically defined classes of polynomial families \cite{men,ouroldpaper}, this class is not very well understood. This is reflected in the low number of helpful alternative characterizations and the conspicuous absence of any known natural complete problem.

Consequently, most progress in arithmetic circuit complexity has not been achieved by considering arithmetic circuits directly, but instead by considering the somewhat more restricted model of arithmetic branching programs (see e.g. \cite{malodportier,val79,nisan1991,koiran2012}). Arithmetic branching programs are widely conjectured to have expressivity strictly between that of arithmetic formulas and circuits, but have so far been better to handle with known proof techniques. One of the nice properties of branching programs that has often played a crucial role is that they can equivalently be seen as computing a specified entry of the iterated product of a polynomial number of matrices.

We extend this view on branching programs by going from matrices to higher dimensional tensors. Consequently, we also go from matrix product to the the generalized notion of contraction of tensors. It turns out that generalizing iterated matrix product to iterated tensor contractions does increase the expressive power of the model and that the resulting tensor formulas capture exactly $\VP$. This characterization of $\VP$ turns out to be fairly robust in the sense that one can add different restrictions on the dimensions of the tensors without changing the expressive power of the model at all.

This is not the first time that the complexity of tensor calculus is
studied. Damm, Holzer and  McKenzie\cite{mck} have characterized
different boolean complexity classes by formulas having matrices as inputs and using
addition, matrix product and tensor product as operations. Malod \cite{mal} adapted these
formulas to the arithmetic circuit setting and showed characterizations for most arithmetic circuit classes.
One difference between
these results and those in our paper is that in \cite{mck} and \cite{mal} tensors are always encoded as matrices, i.e.\ the tensor product is expressed as the Kronecker product of two matrices. 
Another difference is that both the characterization of $\VP$ obtained in \cite{mal} and the similar characterization of $\LOGCFL$ (the Boolean analogon of $\VP$) from \cite{mck} require an additional restriction, called \textit{tameness} on the size of matrices computed at each gate of the formula. This restriction permits to control the growth of the intermediate objects in the computation but may seem not very natural. In this present  work, working directly with tensors instead of a matrix representation makes such a unnecessary and a more direct connection between $\VP$ and tensor calculus is established.

The paper is organized in three parts: We first give the definitions and properties of the objects we will deal with necessary to understand the remainder of the paper. In the second part we prove the lower bound of our characterization, i.e.\ we show our tensor formulas can efficiently simulate arithmetic circuits. In the third part we prove the corresponding upper bound: We first study a very restrictive class of tensor formulas and show that polynomials computed by them can also be computed by arithmetic circuits. Then we extend this result in several steps to general tensor formulas of polynomial size.

\section{Preliminaries}
In the following, $\K$ is a field and bold letters denote tuples when
there is no ambiguity on their length.

\subsection{Arithmetic circuits}

We will use
 the well known model of arithmetic circuits to measure the complexity of polynomials. In this section we give some definitions and well
known properties of arithmetic circuits. For more background see
e.g.\ \cite{bur00,malodportier}.

An \emph{arithmetic circuit} is a directed acyclic graph with vertices
of indegree $0$ or $2$ called \emph{gates}. The gates of indegree $0$
are called the \emph{inputs} and are labeled with elements of $\K$ or
variables. The gates of indegree $2$, called \emph{computation gates}, are
labeled with operations of the field ($+$ and $\times$). The
polynomial computed by a gate is defined inductively. The polynomial
computed by an input gate is the one corresponding to its label. The
polynomial computed by a computation gate is the sum or the product
of the polynomials computed by its children. We assume that there exists
a distinguished gate called the \emph{output}. The polynomial computed
by an arithmetic circuit is the one computed by its output gate. The
\emph{size} of a circuit $C$, denoted by $|C|$, is the number of
vertices of its underlying DAG.

An arithmetic circuit $C$ is said to be 
\emph{skew}
if for each $\times$-gate at least one of its children is an input of
the circuit. A circuit is said to be \emph{multiplicatively disjoint}
if for each $\times$-gate, its two input subcircuits are disjoint.

A family $(f_n)_{n \in \N}$ of polynomials is in $\VP$ if there exists
a family of multiplicatively disjoint circuits $(C_n)_{n \in \N}$ and
a polynomial $P$ such that for all $n \in \N$, $C_n$ computes $f_n$
and $|C_n| \leq P(n)$. The family $(f_n)$ is in 
$\VPw$ if the $C_n$ are skew.

\begin{remark}
Originally, $\VP$ was defined as families of polynomials that can be computed by
polynomial size circuits and have polynomially bounded degree. As
shown in \cite{malodportier}  the definition given here is
equivalent to the original one. We prefer this one 
here because
the semantic condition on the degree is harder to deal with than multiplicatively disjointness which is more syntactic.
\end{remark}

In the following, we will simulate arithmetic circuits by formulas
computing tensors. We 
use the notion of
\emph{parse trees} of a circuit. For a multiplicatively disjoint
circuit $C$, we define its parse trees inductively. A parse tree $T$
of $C$ is a subgraph of $C$ constructed as follows:
\begin{itemize}
\item Add the output of $C$ to $T$
\item For every gate $v$ added to $T$ do the following:
\begin{itemize}
\item If $v$ is a $+$-gate, add exactly one of its children to $T$.
\item If $v$ is a $\times$-gate, add both of its children to $T$.
\end{itemize}
\end{itemize}
As $C$ is multiplicatively disjoint, a parse tree of $C$ is a
tree. The monomial $m(T)$ computed by a parse tree $T$ is the product
of the labels of its leaves. The polynomial computed by $C$ is the sum
of the monomials of all parse trees of $C$.

\subsection{Tensors}
In this paper, we interpret tensors as multidimensional arrays. Their
algebraic nature is not studied here.
  A good introduction
to multilinear algebra and tensors can be found in
\cite{marv}.

Let $\tuple n k$ be $k$ positive integers. A \emph{$k$-dimensional
  tensor} $T$ of \emph{order} $(\tuple n k)$ is a mapping  $T:\set{n_1}
\times \ldots \times \set{n_k} \rightarrow \mathbb{K}$. For $i_1 \in
\set{n_1}, \ldots, i_k \in \set{n_k}$, we denote by $T[\tuple i k]$
the value of the mapping on the point $(\tuple i k)$. We call these
values \emph{entries} of $T$. We denote by~$D(T)$ the \emph{domain} of~$T$. Obviously~$D(T) = \set{n_1} \times \ldots \times \set{n_k}$.

The \emph{size} of a tensor $T$, denote by $\|T\|$, is the number of
entries, i.e. $\|T\| = \prod_{i=1}^k n_i$, where~$T$ is of order
$(\tuple n k)$. The \emph{maximal order} of $T$, denote by $\maxorder
T$ is $\max_{i \in \set{k}} n_i$.
In the following, we also call tensors of dimension $2$ matrices and 
tensors of dimension~$1$ vectors.

\begin{definition}[Contraction]
Let $T$ be a $k$-dimensional tensor of order $(\tuple n k)$ and $G$ an
$l$-dimensional tensor of order $(\tuple m l)$ with $k,l \geq 1$. If
$n_k = m_1$, we denote by $T * G$ the \emph{contraction} of $T$ and $G$ on the dimensions $k$ and $1$ which is a tensor of order $(\tuple n
{k-1}, \tuple[2] m l)$ defined as $(T * G)[\e_1, \e_2] =
\sum_{i=1}^{n_k} T[\e_1, i]G[i, \e_2]$ for all $\e_1\in \set{n_1} \times \ldots \times \set{n_{k-1}}$ and $\e_2 \in \set{m_2}\times \ldots \times \set{m_l}$. 
\end{definition}

\begin{remark}
Obviously, contraction is a direct generalization of the matrix
product. Indeed, if both $T$ and $G$ are matrices, then $T*G$ is the
ordinary matrix product.
\end{remark}

\begin{proposition}\label{prop:associativity}
 Let $T$, $G$, $H$ be tensors with $\dim(G)\ge 2$ such that $T * (G *
 H)$ and $(T * G) * H$ are both well defined. Then $$T * (G * H) = (T
 * G) * H.$$
\end{proposition}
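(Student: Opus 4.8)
The plan is to show that both sides are tensors of the same order and that they agree entry by entry, the latter reducing to the commutativity of finite sums.

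First I would fix notation, writing $T$ of order $(\tuple n k)$, $G$ of order $(\tuple m l)$ with $l = \dim(G) \ge 2$, and $H$ of order $(\tuple p s)$. Well-definedness of $T * G$ gives $n_k = m_1$, and since $l \ge 2$ the last dimension of $T*G$ is $m_l$, so well-definedness of $(T*G)*H$ gives $m_l = p_1$. I would then observe that these two equalities $n_k = m_1$ and $m_l = p_1$ are exactly what is needed for $G*H$ to be well defined and, since (again using $l \ge 2$) the first dimension of $G*H$ is $m_1$, for $T*(G*H)$ to be well defined as well. Unwinding the definition of contraction, both $T*(G*H)$ and $(T*G)*H$ then have order $(\tuple n {k-1}, \tuple[2] m {l-1}, \tuple[2] p s)$, where a block of indices is to be read as empty if its range is (for instance the $m$-block when $l = 2$).

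Next I would compute a generic entry of each side. Fixing $\e_1 \in \set{n_1}\times\cdots\times\set{n_{k-1}}$, $\e_2 \in \set{m_2}\times\cdots\times\set{m_{l-1}}$ and $\e_3 \in \set{p_2}\times\cdots\times\set{p_s}$, two applications of the definition of contraction give
\[
(T*(G*H))[\e_1,\e_2,\e_3] = \sum_{i=1}^{n_k} T[\e_1,i]\,(G*H)[i,\e_2,\e_3] = \sum_{i=1}^{n_k}\sum_{j=1}^{m_l} T[\e_1,i]\,G[i,\e_2,j]\,H[j,\e_3]
\]
and, symmetrically,
\[
((T*G)*H)[\e_1,\e_2,\e_3] = \sum_{j=1}^{m_l} (T*G)[\e_1,\e_2,j]\,H[j,\e_3] = \sum_{j=1}^{m_l}\sum_{i=1}^{n_k} T[\e_1,i]\,G[i,\e_2,j]\,H[j,\e_3].
\]
Since these are finite sums over $\K$, swapping the order of summation shows the two values are equal; as all entries agree, the tensors are equal.

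The computation itself is routine, so the step I would be most careful about is the bookkeeping of orders, which is also where the hypothesis $\dim(G) \ge 2$ is genuinely used: it guarantees that the dimension of $G$ contracted against $T$ (its first) differs from the one contracted against $H$ (its last), so that performing one contraction does not consume the dimension required for the other. Without this assumption the two expressions need not even live on comparable domains, and I expect there is no further content once the orders are pinned down.
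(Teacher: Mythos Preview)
Your proof is correct and is essentially the same approach as the paper's: both reduce the claim to an entrywise computation, with the paper packaging the interchange of the two finite sums as ``associativity of the ordinary matrix product'' applied to the matrix slices $T_{\e_1}$, $G_{\e_2}$, $H_{\e_3}$ obtained by freezing the middle indices. Your version simply unpacks that associativity directly, and is if anything more careful about the order bookkeeping and the role of the hypothesis $\dim(G)\ge 2$.
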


\begin{proof}
By direct consequence of the associativity of the ordinary
matrix product. For a tensor $T$ of dimension $k \geq 2$ and order
$(\tuple n k)$, and 
a tuple $\e$ of length $k-2$ we define the
$n_1 \times n_k$ matrix $T_\e :=(T[i,\e,j])_{i \in \set{n_1},j \in
  \set{n_k}}$. Then by
  associativity of matrix product: 
$$ \forall \e_1,\e_2,\e_3\colon T_{\e_1} * (G_{\e_2} * H_{\e_3}) =
(T_{\e_1} * G_{\e_2}) * H_{\e_3}.$$ Hence, the claim follows when
$\dim(T) \geq 2$ and $\dim(H) \geq 2$. For $\dim(T) = 1$ or $\dim(H) =
1$ the argument is similar.
\end{proof}

\begin{observation}\label{obs:breakassociativity}
 If $G$ is a vector, then the equality of Proposition
 \ref{prop:associativity} may not be true anymore. For example
 $$ \left( \begin{pmatrix}0&1\\0&0\end{pmatrix}
   * \begin{pmatrix}0\\1\end{pmatrix}\right)
     * \begin{pmatrix}1&0\\0&0\end{pmatrix}
       = \begin{pmatrix}1\\0\end{pmatrix}
         \ne \begin{pmatrix}0\\0\end{pmatrix}
           = \begin{pmatrix}0&1\\0&0\end{pmatrix} *
             \left(\begin{pmatrix}0\\1\end{pmatrix}
               * \begin{pmatrix}1&0\\0&0\end{pmatrix}\right).$$
\end{observation}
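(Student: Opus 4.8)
The plan is to exhibit explicit small tensors that witness the claimed failure of associativity. Since Proposition~\ref{prop:associativity} already forces $T*(G*H)=(T*G)*H$ whenever $\dim(G)\ge 2$, any counterexample must put a vector in the middle, so I would look for the cheapest such configuration: take $G$ to be a vector of order $(2)$ and let $T,H$ be $2\times 2$ matrices, with the entries to be chosen only at the end.

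First I would check that with these dimensions both bracketings are actually well defined and produce tensors of the same order, since otherwise comparing them would make no sense. Unwinding the definition of contraction: $T*G$ contracts the last index of $T$ against the single index of $G$, giving the order-$(2)$ vector with entries $\sum_i T[a,i]\,G[i]$, i.e.\ the product $TG$; contracting this against the first index of $H$ then yields an order-$(2)$ vector which, written as a column, equals $H^{\mathsf{T}}TG$. On the other side, $G*H$ contracts $G$ against the first index of $H$, giving the order-$(2)$ vector $H^{\mathsf{T}}G$, and contracting the last index of $T$ against it yields $TH^{\mathsf{T}}G$. So both sides are order-$(2)$ vectors, the comparison is legitimate, and the whole question reduces to finding matrices $T,H$ and a vector $G$ with $H^{\mathsf{T}}TG\neq TH^{\mathsf{T}}G$ — which is possible precisely because matrix multiplication is not commutative.

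From there it is just a matter of picking non-commuting $T$ and $H^{\mathsf{T}}$ together with a vector detecting the difference. A clean choice is $T=\begin{pmatrix}0&1\\0&0\end{pmatrix}$, $H=\begin{pmatrix}1&0\\0&0\end{pmatrix}$, $G=\begin{pmatrix}0\\1\end{pmatrix}$: a direct computation gives $T*G=\begin{pmatrix}1\\0\end{pmatrix}$ and hence $(T*G)*H=\begin{pmatrix}1\\0\end{pmatrix}$, whereas $G*H=\begin{pmatrix}0\\0\end{pmatrix}$ and hence $T*(G*H)=\begin{pmatrix}0\\0\end{pmatrix}$, so the two disagree.

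The only real content here is conceptual rather than computational: one must see \emph{why} the associativity of Proposition~\ref{prop:associativity} breaks down exactly for a vector in the middle. The reason, made visible by the computation above, is that when $G$ is a genuine matrix the expression $T*G*H$ behaves like the product $TGH$ of three matrices and is therefore associative, whereas collapsing $G$ to a vector turns the contraction with $H$ on the right into multiplication by $H^{\mathsf{T}}$ on the \emph{opposite} side, so the two bracketings become $H^{\mathsf{T}}TG$ and $TH^{\mathsf{T}}G$ and can differ. Recognizing this structural asymmetry both explains the phenomenon and pinpoints where a minimal witness can live; everything after that is routine bookkeeping, so the main (mild) obstacle is spotting the asymmetry, not verifying the example.
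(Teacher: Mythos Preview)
Your proposal is correct and uses exactly the same witnesses as the paper; the observation in the paper is self-contained (the example is the proof), so there is nothing further to match. Your additional rewriting of the two bracketings as $H^{\mathsf{T}}TG$ versus $TH^{\mathsf{T}}G$ is a nice conceptual explanation that the paper does not spell out, but it is compatible with and goes slightly beyond what is stated there.
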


\begin{definition}
A \emph{$\sform$} $F$ is a labeled, ordered, rooted binary tree whose
the leaves, called the inputs, are labeled by tensors whose entries
are elements of $\K$ or variables and the other nodes are labeled by
$*$.
The tensor $T_v$ computed by a node $v$ is defined inductively:
\begin{itemize}
\item If $v$ is a leaf then $T_v:=\lab v$.
\item If $v$ is labeled by $*$ and has left child $v_1$ and right child $v_2$ then
  $T_v := F_{v_1} * F_{v_2}$.
\end{itemize}
A $\sform$ computes the tensor computed by its root.
\end{definition}

As the entries of the input tensors are constants of $\K$ or
variables, each entries of a tensor computed by a gate is a polynomial
of $\K[\tuple X n]$. This is why it makes sense to compare the
computational power of $\sform$s and arithmetic circuits defined in
the last section. Moreover, of all the polynomial computed in the output of an $\sform$ we will mostly only be interested in one single polynomial. Thus we assume that the output tensor has only one single entry, i.e.\ the tensor is indeed a scalar. Observe that this form can always be achieved by contracting with vectors. We say that the scalar polynomial computed by a $\sform$ is \emph{the} polynomial computed by it.

\begin{definition}
The \emph{size} of a $\sform$ $F$, denoted by $|F|$, is the number of
$*$-gates plus the size of the inputs, i.e.\ $|F|:= |\{v\mid \lab{v}=
*\}| + \sum_{T: T \text{ input of }F} \|T\|$. The \emph{dimension} of
$F$, denoted by $\dim(F)$ is the dimension of the tensor computed by
$F$. The \emph{maximal dimension} of $F$, denoted by $\maxdim(F)$ is
the maximal dimension of the tensors computed at the gates of $F$,
i.e. $\maxdim(F) := \max_{v:\text{ gate in $F$}} (\dim(T_v))$. The
\emph{input dimension} of $F$ is $\max_{v: v\text{ input of } F}
\dim(T_v)$.
\end{definition}

We will often mix the notations for tensors and for tensor
formulas. For example, if $F$ is a tensor formula computing the tensor
$T$, we will speak of the order of $F$ instead of $T$ and write
$F[\e]$ instead of $T[\e]$. Moreover, given two different formulas $F$
and $F'$, we will write $F \simeq F'$ if they compute the same tensor.

\section{From arithmetic circuits to $\sform$s}

We  describe how a family of polynomials in $\VP$ can
be simulated by a family of $\sform$s of polynomial size and  maximal dimension
$3$. Our proof is inspired by a proof from \cite{men} where it is 
shown that polynomials in $\VP$ can be represented by bounded
treewidth CSPs. 

\begin{theorem}
\label{vpform}
Let $(f_n) \in \VP$. There exists a family of $\sform$s $(F_n)$ of
maximal dimension $3$ and polynomial size such that $F_n$ computes $f_n$ for all $n$.
\end{theorem}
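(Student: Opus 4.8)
The plan is to evaluate the parse-tree expansion of a $\VP$ polynomial by a family of contractions whose combinatorial structure is a balanced tree of logarithmic depth. Fix $(f_n)\in\VP$ and, by the characterisation recalled in the preliminaries, a family $(C_n)$ of multiplicatively disjoint circuits of size $s=s(n)$ polynomially bounded in $n$ with $f_n=\sum_T m(T)$, the sum ranging over parse trees $T$ of $C_n$. The obvious attempt --- translate $C_n$ gate by gate, a $+$-gate becoming an addition and a $\times$-gate the contraction of two scalars --- is hopeless: a $\sform$ is a tree, so at every $\times$-gate its two argument subformulas have to be written out separately, and iterating this along $C_n$, which may be polynomially deep, duplicates subformulas an exponential number of times. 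Avoiding this blow-up is the whole difficulty.

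Instead I would reorganise $\sum_T m(T)$ as a bounded-width, logarithmic-depth recursion, adapting the construction of \cite{men} that represents $\VP$ polynomials by bounded-treewidth sum-products; duplicating subformulas at each of the $O(\log n)$ levels then costs only a polynomial factor overall. The objects maintained are gate-indexed tensors of dimension at most $3$ and polynomially many entries: for each gate $g$ and each power of two $d\le s$, a vector $p^{(d)}$ of order $(s)$ with $p^{(d)}_g=\sum_T m(T)$ over parse trees rooted at $g$ with at most $d$ leaves; and for each power of two $\ell\le s$, a matrix $A^{(\ell)}$ of order $(s,s)$ collecting the monomials of parse-tree \emph{contexts} from $g$ down to a single pruned occurrence of $h$ of degree at most $\ell$. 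The recursions always halve the parameter: contexts compose at a middle gate, schematically $A^{(2\ell)}_{g,h}=\sum_w A^{(\ell)}_{g,w}\tilde A^{(\ell)}_{w,h}$, which is a matrix product and hence one contraction; and a parse tree of $g$ with at most $2d$ leaves splits at a balanced gate $w$, giving $p^{(2d)}_g=\sum_w\hat A^{(d)}_{g,w}p^{(d)}_w$, again a contraction. Chains of $+$-gates, which do not change degree, are absorbed into the base cases by small matrix-product subformulas computing the relevant transitive closures over the $+$-gates of $C_n$. The fan-in-$2$ structure of $C_n$ --- the rule $p_g=p_{g_1}p_{g_2}$ at a $\times$-gate and the matching splitting of contexts --- enters through a single input tensor $L$ of order $(s,s,s)$ with $L[g,g_1,g_2]=1$ iff $g_1,g_2$ are the children of $g$. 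This $L$ is the only three-dimensional object, and it is only ever contracted with a vector (producing a matrix) or with a matrix, never with another three-dimensional tensor, so every gate of $F_n$ computes a tensor of dimension at most $3$.

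Unfolding the recursion gives $F_n$: $O(\log n)$ levels, constantly bounded branching, hence polynomially many $*$-gates; inputs ($L$ and a handful of matrices and vectors read off $C_n$, each of at most $s^3$ entries) of polynomial total size; and maximal dimension $3$. Contracting the final vector $p^{(2^{\lceil\log s\rceil})}$ with the basis vector selecting the output gate turns it into the scalar $f_n$, which is the form of output we require. The point that needs real care is the bookkeeping inherited from \cite{men}: choosing the degree and length parameters so the recursion genuinely halves them and reassembles correctly at the split gate, handling the $+$-chains, and the routine-but-error-prone plumbing --- duplicating an index by a diagonal tensor, restricting $L$ to a $\times$-gate's two children, realising the sums $\sum_w$ and the additions by stacking vectors into a matrix and contracting with an all-ones vector --- all while keeping dimensions bounded by $3$. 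One must also honour Observation~\ref{obs:breakassociativity}: several intermediate tensors degenerate to vectors, so the bracketing of the contractions has to be pinned down explicitly and cannot simply be rearranged via Proposition~\ref{prop:associativity}.
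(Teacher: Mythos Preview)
Your approach is plausible and, if the VSBR-style bookkeeping you flag is carried out carefully, correct; but it is genuinely different from, and considerably heavier than, the paper's argument. The paper does \emph{not} redo depth reduction inside the tensor formalism. Instead it first invokes a black-box normalisation (Proposition~\ref{prop:isomorphic}, obtained by combining \cite{malodportier} and \cite{valpar}): every $f\in\VP$ is computed by a polynomial-size multiplicatively disjoint circuit $C'$ all of whose parse trees are isomorphic to one fixed tree~$\calT$. Once this is in hand, the $\sform$ is built \emph{along~$\calT$ itself}: at each node $s\in\calT$ one maintains an $r\times r$ matrix $F_s$ (with $r=|C'|$) whose diagonal entry $F_s[i,i]$ is the sum of $m(p)$ over partial parse trees $p:\calT_s\to V(C')$ sending $s$ to $v_i$. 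The induction is then entirely local --- a leaf becomes a diagonal input matrix, a unary node contracts with one fixed $(r,r,r)$ tensor encoding the $+$-edges, a binary node with two such tensors encoding left/right $\times$-edges --- and the resulting formula literally has the shape of~$\calT$, so the duplication problem you identify never arises. Dimension~$3$ and the size bound $9|C'|^3|\calT|$ are immediate.

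What your route buys is self-containment: you avoid citing Proposition~\ref{prop:isomorphic} and instead re-derive the log-depth structure by balanced splitting. The price is exactly the ``routine-but-error-prone plumbing'' you mention --- getting the doubling recursions for $p^{(d)}$ and the several context matrices $A^{(\ell)},\tilde A^{(\ell)},\hat A^{(\ell)}$ to match up without over- or under-counting, absorbing $+$-chains, and verifying that every intermediate stays in dimension at most~$3$. The paper sidesteps all of this by pushing the hard combinatorics into the normal form of Proposition~\ref{prop:isomorphic}; after that, the tensor construction is a short explicit induction.
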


We use the following observation from \cite{men} which can be proved by combining results from Malod and Portier
\cite{malodportier} and Valiant et al.\ \cite{valpar}.

\begin{proposition}\label{prop:isomorphic}
 Let $f$ be computed by an arithmetic circuit $C$ of size $s$. Then there is an arithmetic circuit $C'$ of size $s^{O(1)}$ that also computes $f$ such that all parse trees of $C'$ are isomorphic to a common tree $\calT$.
\end{proposition}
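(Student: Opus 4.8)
The plan is to chain two known transformations and then perform a purely combinatorial reshaping of the resulting circuit. First, since this observation is used in the setting $f\in\VP$, the characterization of $\VP$ by Malod and Portier \cite{malodportier} lets us assume that $C$ is multiplicatively disjoint, of size $s^{O(1)}$, and of degree at most $s$. The depth reduction of Valiant, Skyum, Berkowitz and Rackoff \cite{valpar} then gives an equivalent circuit $C_1$ of size $s^{O(1)}$ and depth $O(\log^2 s)$ in which, crucially, every path from the output to an input passes through only $O(\log s)$ many $\times$-gates; one checks that $C_1$ can be taken multiplicatively disjoint as well. Observe that multiplicative disjointness already makes each parse tree of $C_1$ a genuine subtree of $C_1$, hence of size $s^{O(1)}$, and with at most $2^{O(\log s)}=s^{O(1)}$ leaves.

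The heart of the argument is to reshape $C_1$ into an equivalent circuit $C'$ that is \emph{leveled} and \emph{layer-homogeneous}: there are levels $0,1,\dots,L$, every input gate sits on level $0$, every computation gate on level $\ell\ge 1$ has both of its (fan-in-$2$) children on level $\ell-1$, and all gates on a common level carry the same operation, $+$ or $\times$. This is obtained by standard padding: reduce all fan-ins to $2$ (replacing a large $+$-gate by a balanced $+$-tree, which preserves multiplicative disjointness, does not change the number of $\times$-gates along any path, and costs only a polynomial factor in size); give each gate the level equal to the length of a longest input-to-gate path; subdivide every level-skipping edge by a chain of \emph{relay} gates that copy their input; and feed these relays from constants that are themselves propagated down to level $0$ through auxiliary ``constant towers''. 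The single delicate point is keeping $C'$ multiplicatively disjoint: a relay or a constant gate on a $\times$-level must draw its two children from \emph{disjoint} subcircuits, so constants must be duplicated as one descends through $\times$-levels; but since there are only $O(\log s)$ such levels, this duplication creates only $s^{O(1)}$ gates. The outcome is a multiplicatively disjoint, leveled, layer-homogeneous circuit $C'$ of size $s^{O(1)}$ computing $f$, with all input gates on level $0$.

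It remains to observe that all parse trees of $C'$ are isomorphic. Constructing a parse tree $T$ top-down from the output (on level $L$): at a gate on a $+$-level one keeps exactly one of its two children, at a gate on a $\times$-level one keeps both, and the recursion stops precisely on level $0$, where every input gate lies. Since all gates on a given level of $C'$ carry the same operation, the number of children of a node of $T$ is determined by the level of its gate and is independent of $T$; hence $T$ is always the same complete rooted tree $\calT$, namely the one in which a node at distance $i$ from the root has one child if level $L-i$ is a $+$-level and two children if it is a $\times$-level, and all of whose leaves are at distance $L$ from the root. Thus every parse tree of $C'$ is isomorphic to $\calT$, with $|\calT|\le|C'|=s^{O(1)}$, which is what was claimed.

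The step I expect to be the main obstacle is the reshaping of $C_1$ into $C'$, where levelling, layer-homogeneity, all inputs on the bottom level, multiplicative disjointness and polynomial size must hold simultaneously; multiplicative disjointness and polynomial size are the two that pull against each other, since naive padding either reuses subcircuits or blows up the size. A careful padding succeeds precisely because \cite{valpar} leaves only $O(\log s)$ $\times$-gates on every path, which bounds both $|\calT|$ and the number of duplicated constants by $s^{O(1)}$. The two imported facts (the $\VP$-normalization of \cite{malodportier} and the depth reduction of \cite{valpar}) are used as black boxes.
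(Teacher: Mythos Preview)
Your proposal is correct and follows precisely the route the paper indicates: the paper does not give its own proof but merely cites this observation from \cite{men}, remarking that it ``can be proved by combining results from Malod and Portier \cite{malodportier} and Valiant et al.\ \cite{valpar}.'' You have filled in exactly that combination---normalizing to a multiplicatively disjoint circuit via \cite{malodportier}, applying the depth reduction of \cite{valpar} to bound the multiplicative depth by $O(\log s)$, and then padding to a leveled, layer-homogeneous circuit so that the parse-tree shape is forced by the level sequence---and you have correctly identified the only nontrivial tension (preserving multiplicative disjointness under padding) together with the reason it is harmless (only $O(\log s)$ multiplicative levels, hence at most $s^{O(1)}$-fold duplication of constants).
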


Theorem \ref{vpform} follows direcly from Proposition \ref{prop:isomorphic} and the following lemma:

\begin{lemma}
Let $C$ be an arithmetic circuit computing the polynomial $f$ whose
parse trees are all isomorphic to a common parse tree $\calT$. Then there
exists a $\sform$ $F$ of maximal dimension $3$
and of size $9|C|^3|\calT|$ that computes $f$.
\end{lemma}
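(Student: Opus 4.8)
The plan is to encode each gate of $C$ by a small tensor and to glue these tensors together along a tree of contractions mirroring the structure of the common parse tree $\calT$. The key point is that, because every parse tree of $C$ is isomorphic to $\calT$, the polynomial $f$ is a sum over all functions that map the nodes of $\calT$ to gates of $C$ in a way consistent with the circuit structure (a $+$-gate picks one child, a $\times$-gate picks both, and the labels match); each such function contributes the product of the leaf-labels. We want to realize this sum as a sequence of contractions over index sets of size $|C|$, where an index ranging over $[\,|C|\,]$ at a node of $\calT$ records which gate of $C$ is used there. Since contraction sums over the shared dimension, summing over all consistent gate-assignments becomes an iterated contraction, and keeping $\maxdim$ bounded is exactly the reason we need a tree of contractions rather than a single big tensor.

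First I would fix, for each internal node $u$ of $\calT$ with children $u_1$ (and $u_2$ if $u$ is a $\times$-node), a $3$-dimensional tensor $M_u$ indexed by (a) the gate assigned to the parent edge above $u$, ranging over $[\,|C|\,]$, and (b,c) the gates assigned to the edges below $u_1$ and $u_2$ (using a dummy dimension of size $1$ when $u$ is a $+$-node, which keeps dimension at most $3$): the entry is $1$ exactly when the triple of gates is a valid local configuration in $C$ (the gate $g$ above $u$ is a $+$-gate having the gate below $u_1$ as a child, or a $\times$-gate having the two children below as its two inputs), and $0$ otherwise. For a leaf $w$ of $\calT$ I would use a vector indexed by $[\,|C|\,]$ whose $g$-th entry is the label of gate $g$ if $g$ is an input gate (and $0$ otherwise); and at the root of $\calT$ I would pin the parent-index to the output gate of $C$ by contracting with an appropriate standard basis vector. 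Then I would contract these tensors along the tree shape of $\calT$, processing each node after its children, so that the index on each edge of $\calT$ is summed out exactly once. By construction the resulting scalar equals $\sum_T m(T)$ over parse trees $T$, which is $f$.

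The arithmetic is routine but the bookkeeping has two delicate spots. The first is ordering the dimensions so that contraction, which only acts on the last dimension of the left operand and the first of the right, actually lines up the intended indices; here I would invoke Proposition~\ref{prop:associativity} to reassociate freely (being careful, per Observation~\ref{obs:breakassociativity}, to keep intermediate tensors of dimension $\ge 2$ until the very end) and, where needed, interleave auxiliary $|C|\times|C|$ permutation-like matrices to move a dimension into contracting position. The second is the size bound: $\calT$ has $O(|\calT|)$ nodes, each contributing one tensor of at most $|C|^3$ entries and $O(1)$ contraction gates plus the $O(1)$ rewiring matrices, so $|F| = O(|C|^3|\calT|)$; tracking the constants gives the claimed $9|C|^3|\calT|$.

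The main obstacle I anticipate is the dimension-alignment step: naively each local tensor needs to "remember" the parent index and both child indices, and after contracting several of them one can accumulate extra free dimensions and blow past $\maxdim = 3$. Handling this cleanly — contracting children into a parent immediately, in the right order, so that exactly the two edge-indices incident to a node are live at once and everything else has already been summed away — is the crux of the construction, and is presumably where the explicit use of isomorphism-to-$\calT$ (so that the contraction schedule is uniform and finite) does the real work.
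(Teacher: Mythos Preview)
Your approach is the paper's: build the formula inductively along $\calT$, attach at each node a small tensor encoding the local parse-tree constraint, and let the contractions sum over all consistent gate-assignments. But you stop exactly where the work is---you flag the $\maxdim\le 3$ bookkeeping as ``the crux'' and then leave it unresolved, hedging with permutation matrices, dummy dimensions, and appeals to associativity. The paper's resolution is a concrete invariant: for each $s\in V(\calT)$ the subformula $F_s$ is an $r\times r$ \emph{diagonal} matrix (where $r=|C|$) with $F_s[i,j]=\delta_{i,j}\sum_{p:\,p(s)=v_i} m(p)$. With this invariant the two inductive cases become explicit one-line formulas: for a $+$-node with child $s_1$, $F_s:=E*(F_{s_1}*M)$; for a $\times$-node with children $s_1,s_2$, $F_s:=(E*(F_{s_1}*M_L))*(E*(F_{s_2}*M_R))$, where $E$ is the all-ones $r$-vector and $M,M_L,M_R$ are fixed $(r,r,r)$ indicator tensors for the child relations in $C$. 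Each formula visibly has maximal dimension $3$, and the size bound $9r^3|\calT|$ falls out by counting.

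Your own vector-based picture would also work, and more simply than you anticipate. If you keep the invariant that $R_u$ is an order-$r$ vector with $R_u[g]=\sum_{p:\,p(u)=g}m(p)$, then a $+$-node is just $R_u=M*R_{u_1}$ with $M$ an $r\times r$ matrix, and a $\times$-node is $R_u=(N*R_{u_2})*R_{u_1}$ with $N$ an $(r,r,r)$ tensor: since $*$ always contracts the last index on the left with the first on the right, the indices line up automatically and the intermediate dimensions are $3,2,1$. No permutation matrices, no dummy dimensions, no reassociation are needed. The obstacle you anticipate is not one---once you commit to an invariant and write these two formulas down, the proof is finished.
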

\begin{proof}
We 
construct a tensor formula along the tree $\calT$ which contains the sum of all monomials of $f_n$ in its entries. 
We denote by $V(\calT)$ (resp. $V(C)$) the vertices of $\calT$ (resp. $C$). For $s \in V(\calT)$, we call $\calT_s$ the subtree of
$\calT$ rooted in $s$. We define a \emph{partial parse tree} rooted in
$s$ to be a function $p : V(\calT_s) \rightarrow V(C)$ respecting the
following conditions for all $t \in V(\calT_s)$:
\begin{enumerate}
\item If $t$ is a leaf, then $p(t)$ is an input of $C$.
\item If $t$ has one child $t_1$, $p(t)$ is a $+$-gate and $p(t_1)$ is
  a child of $p(t)$ in $C$.
\item If $t$ has two children $t_1$ and $t_2$, then 
\begin{enumerate}
\item $p(t)$ is a $\times$-gate,
\item $p(t_1)$ is the left child of $p(t)$, and
\item $p(t_2)$ is the right child of $p(t)$.
\end{enumerate}
\end{enumerate}

We call these conditions the parse tree conditions. It is easy to see
that when $s$ is the root of $\calT$ (and thus $\calT_s = \calT$) and
$p: V(\calT) \rightarrow V(C)$, then $p(V(\calT))$ is the vertex set of a parse tree of
$C$ if and only if $p$ is a partial parse tree rooted in
$s$.

If $p$ is a partial parse tree rooted in $s \in V(\calT)$, we define
the monomial $m(p)$ computed by $p$ by $m(p) := \prod_{t \in
  \leaf{\calT_s}} \lab{p(t)}$. Observe that this is well defined as $p$
respects, in particular, the first parse tree condition and thus
$p(t)$ for $t \in \leaf{\calT_s}$ is always an input of $C$. If $p$ does not respect
the parse tree conditions, we set $m(p) = 0$. With this notation we have
$$f = \sum_{p : V(\calT) \rightarrow V(C)} m(p).$$

We index the vertices of $C$ : $V(C) = \{\tuple v r\}$ with $r =
|C|$. We denote by $E$ the tensor of dimension $1$ and order $(r)$
such that for all $i \leq r$, $E[i] = 1$ and by $\delta_{i,j}$ the
dirac function which equals $1$ if $i=j$ and $0$ otherwise. We
construct by induction along the structure of $\calT$ a $\sform$~$F_s$ for each  $s \in \calT$. The formula $F_s$ has dimension
$2$, order $(r,r)$, size at most $9r^3|\calT_s|$ and maximal dimension
$3$.  Furthermore, for all $i,j \leq r$: $$ F_s[i,j] = \delta_{i,j}
\sum_{p : V(\calT_s) \rightarrow V(C) \atop p(s) = v_i} m(p).$$
Observing $f = E * F_s * E$ when $s$ is the root of $\calT$ completes
the proof. We now describe the inductive construction of
$F_s$. Several cases occur:

~ 

\noindent {\bf $s$ is a leaf: } In this case $\calT_s$ consists only of the leaf
$s$. The partial parse trees of $\calT_s$ are functions $p : \{s\}
\rightarrow V(C)$ and $m(p) = \lab{p(s)}$ if $p(s)$ is a input of $C$
and $m(p) = 0$ otherwise. Then $F_s$ 
consists of a $r \times r$ input matrix $I$ such that for all $i,j \leq r$,
$$I[i,j] = \left\{
    \begin{array}{ll}
      \delta_{i,j} \lab{v_j} \text{ if } v_j \text{ is an input} \\ 0 \text{
        otherwise.}
    \end{array}
\right.
$$ 
Obviously, $F_s$ is of size $r \leq 9r^3$, of maximal dimension $2$
and $$I[i,j] = \delta_{i,j} \sum_{p : \{s\} \rightarrow V(C) \atop
  p(s) = v_i} m(p)$$.

\noindent {\bf $s$ has one child $s_1$: } We start with an observation
on functions $p : V(\calT_s) \rightarrow V(C)$. Let 
$p_1$ be
the restriction of $p$ on $V(\calT_{s_1})$. If $p$ is a partial parse
tree, then $p_1$ is one, too, because it fulfills the parse tree
conditions for all $t \in V(\calT_{s_1}) \subseteq
V(\calT_s)$. Moreover, $m(p) = m(p_1)$ because the leaves in $p$
and in $p_1$ are the same. In addition, if $p$ is not a partial
parse tree then
\begin{itemize}
\item either $p$ violates a parse tree condition for $t \in
  \calT_{s_1}$. In that case, $p_1$ is not a partial parse tree and
  then $m(p) = m(p_1) = 0$,
\item or $p(s)$ is not a $+$-gate,
\item or $p(s)$ is a $+$-gate but $p(s_1)$ is not a child of
  $p(s)$.
\end{itemize}
We encode these conditions in a tensor of dimension $3$ and order
$(r,r,r)$ defined as
$$ M[i,j,k] := \left\{
\begin{array}{l}
\delta_{j,k} \text{ if } v_j \text{ is $+$-gate and } v_i \text{ is a
  child of } v_j \\ 0 \text{ otherwise. }
\end{array}
\right.
$$ 

Let $F_s$ be the formula $$F_s := E * (F_{s_1} * M)$$ of maximal
dimension $3$, dimension $2$ and order $(r,r)$. We have $|F_s| \le 9r^3(|\calT_s| - 1) + r^3 + 2 + \|E\| \leq
9r^3|\calT_s|$ and
$$ 
\begin{aligned}
F_s[j,k] &= \sum_{i=1}^r (\sum_{p=1}^r F_{s_1}[i,p]M[p,j,k]) \\ &= 
\delta_{j,k} \sum_{i=1}^r (\sum_{p_1 : V(\calT_{s_1}) \rightarrow V(C)
  \atop p_1(s_1) = v_i} m(p_1) M[i,j,j]).
\end{aligned}
$$

Let $p$ be a function $p : V(\calT_{s}) \rightarrow V(C)$ such that
$p(s_1) = v_i$ and $p(s) = v_j$. Let $p_1$ be its restriction on
$V(\calT_{s_1})$. We have
$m(p) = m(p_1)M[i,j,j]$, because
\begin{itemize}
\item if $p$ is a partial parse tree then $M[i,j,j] = 1$ and thus
  $m(p_1)M[i,j,j] = m(p_1) = m(p)$,
\item if $v_j$ is not a $+$-gate then $m(p) = 0$ and also 
  $m(p_1)M[i,j,j]=0$ because $M[i,j,j] = 0$,
\item if $p_1(s_1) = v_i$ is not a child of $v_j = p(s)$ then
  $m(p) = 0$. Since $M[i,j,j] = 0$, we have $m(p) = 0 =
  m(p_1)M[i,j,j]$.
\end{itemize}

Thus we have in each case
$$ F_s[j,k] = 
 \delta_{j,k} \sum_{i=1}^r \sum_{p_1 : V(\calT_{s_1}) \rightarrow V(C)
  \atop p_1(s_1) = v_i} m(p_1)M[i,j,k] = \delta_{j,k} \sum_{p : 
  V(\calT_s) \rightarrow V(C) \atop p(s)=   v_j} m(p)$$ which 
  completes the proof of this case.

\noindent {\bf $s$ has two children, $s_1$ (left child) and $s_2$ (right child): } 
As above, we encode the parse tree conditions in tensors of
dimension $3$ and contract them correctly to compute the desired result. This
time there are two different tensors: one encoding the condition 3.b
and one for~3.c. Let $M_L$ and $M_R$ be the two following $(r,r,r)$
tensors:

$$ M_L[i,j,k] = \left\{
\begin{array}{l}
\delta_{j,k} \text{ if } v_j \text{ is a $\times$-gate and } v_i
\text{ is the left child of } v_j \\ 0 \text{ otherwise,}
\end{array}
\right.
$$ 
$$ M_R[i,j,k] = \left\{
\begin{array}{l}
\delta_{j,k} \text{ if } v_j \text{ is a $\times$-gate and } v_i
\text{ is the right child of } v_j \\ 0 \text{ otherwise. }
\end{array}
\right.
$$ 

Let $F_s$ be the formula of maximal dimension $3$, dimension $2$ and
order $(r,r)$ defined as
$$ F_s = (E * (F_{s_1} * M_L)) * (E * (F_{s_2} * M_R)).$$

 We have
$|F_s| = |F_{s_1}|+\|M_L\|+\|M_R\|+|F_{s_2}|+5+2\|E\| \leq 9r^3|\calT_s|$.
In addition:
$$ 
\begin{aligned}
F_s[i,j] &= \sum_{k=1}^r ((\sum_{a=1}^r F_{s_1}[a,a]M_L[a,i,k])
\times (\sum_{b=1}^r F_{s_2}[b,b]M_R[b,k,j])) \\ &= \delta_{i,j}
\sum_{a,b = 1}^r F_{s_1}[a,a]M_L[a,i,i]F_{s_2}[b,b]M_R[b,i,i] \\ &= 
\delta_{i,j} \sum_{a,b = 1}^r \sum_{p_1 : V(\calT_{s_1}) \rightarrow
  V(C) \atop p_1(s_1) = v_a} \sum_{p_2 : V(\calT_{s_2}) \rightarrow
  V(C) \atop p_2(s_2) = v_b} m(p_1)m(p_2)M_L[a,i,i]M_R[b,i,i]
\end{aligned}.
$$

Similarly to before consider $p : V(\calT_{s})
\rightarrow V(C)$ such that $p(s) = v_i$, $p(s_1) = v_a$ and
$p(s_2)=v_b$. We denote by $p_1$ the restriction of $p$ on
$V(\calT_{s_1})$ and by $p_2$ its restriction on $V(\calT_{s_2})$.
We will show $$m(p) = M_L[a,i,i]M_R[b,i,i]m(p_1)m(p_2)$$ by
studying the possible cases:
\begin{itemize}
\item If $p$ is a partial parse tree then $p_1$ and $p_2$
  are, too. Moreover, since $s$ has two children, $p(s) = v_i$ is necessarily
  a $\times$-gate, $v_a$ its left child and $v_b$ its right
  child. It follows that $M_L[a,i,i] = M_R[b,i,i] = 1$ and $$m(p) = \prod_{l \in
    \leaf{\calT_s}} \lab{l} = \prod_{l \in \leaf{\calT_{s_1}}}
  \lab{l}\prod_{l \in \leaf{\calT_{s_2}}} \lab{l} = m(p_1)m(p_2).$$
\item If $p$ is not a partial parse tree then three cases can occur: If $p_1$
  (resp.\ $p_2$) is not a partial parse tree, then $m(p_1) = 0$
  (resp.\ $m(p_2) = 0$). If $v_i$ is not a $\times$-gate, then
  $M_L[a,i,i] = 0$. Finally, if $v_a$ (resp. $v_b$) is not the left
  (resp.\ right) child of $v_i$, then $M_L[a,i,i] = 0$
  (resp.\ $M_R[b,i,i] = 0$). In all those cases,
  $M_L[a,i,i]M_R[b,i,i]m(p_1)m(p_2) = 0 = m(p)$.
\end{itemize}

This completes the proof.
\end{proof}

\section{From $\sform$s to arithmetic circuits}\label{sct:upper}

In this section we will show that the polynomials computed by
polynomial size $\sform$s can also be computed by polynomial size arithmetic
circuits. We start by first proving this for formulas with bounded
maximal dimension. Then we extend this result by showing that any
$\sform$ can be transformed into an equivalent one with bounded
maximal dimension without increasing the size.

\subsection{Formulas with bounded maximal dimension}

\begin{proposition}
Let $F$ be a $\sform$ of maximal dimension $k$, dimension $l \leq k$
 and order $(\tuple n l)$. 
Let $n := \max_{T: T \text{ input of }F}
(\maxorder{T})$. Then there exists a multiplicatively disjoint circuit
$C$ of size at most $2n^{k+1}|F|$ such that for all $\e \in D(F)$ there exists a gate $v_\e$ in $C$ computing $F[\e]$.
\end{proposition}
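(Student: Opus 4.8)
The plan is to prove the statement by structural induction on $F$: for each gate $v$ of $F$, computing a tensor $T_v$, I will construct a multiplicatively disjoint subcircuit that contains, for every $\e \in D(T_v)$, a gate computing the polynomial $T_v[\e]$, and whose size is at most $2n^{k+1}|F_v|$, where $F_v$ denotes the subformula rooted at $v$. Taking $v$ to be the root then yields the proposition.

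Before the induction I would record one structural fact. Write $k := \maxdim(F)$. Since $n$ bounds every component of the order of every input tensor, a short induction along $F$ shows that every component of the order of every $T_v$ is at most $n$: for a leaf this is the definition of $n$, and for a $*$-gate the order of $T_{v_1} * T_{v_2}$ is a subtuple of the concatenation of the orders of $T_{v_1}$ and $T_{v_2}$. Hence $\|T_v\| \le n^{\dim(T_v)} \le n^k$ at every gate. Since $\maxdim$ and the maximal order of input tensors only decrease when passing to a subformula, I may use the global constants $k$ and $n$ throughout the induction.

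For the base case $F$ is a single leaf labelled by an input tensor $T$, so every $T[\e]$ is a field constant or a variable; a circuit consisting of one input gate per $\e \in D(T)$ works and has size $\|T\| = |F|$. For the inductive step let $v$ be a $*$-gate with left child $v_1$ and right child $v_2$, where $T_{v_1}$ has order $(a_1,\dots,a_p)$, $T_{v_2}$ has order $(b_1,\dots,b_q)$, $a_p = b_1 =: m \le n$, and $(T_v)[\e_1,\e_2] = \sum_{i=1}^m T_{v_1}[\e_1,i]\,T_{v_2}[i,\e_2]$. By induction I obtain multiplicatively disjoint circuits $C_1,C_2$ of sizes at most $2n^{k+1}|F_{v_1}|$ and $2n^{k+1}|F_{v_2}|$ containing gates for all entries of $T_{v_1}$ and of $T_{v_2}$; I take $C_1$ and $C_2$ to be vertex-disjoint. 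For each $\e=(\e_1,\e_2)\in D(T_v)$ I then add $m$ product gates $T_{v_1}[\e_1,i]\times T_{v_2}[i,\e_2]$ and $m-1$ sum gates aggregating them, producing a gate for $(T_v)[\e]$. Every new product gate has its left input in $C_1$ and its right input in $C_2$, which are vertex-disjoint and individually multiplicatively disjoint, so the whole circuit $C$ is multiplicatively disjoint. The number of new gates is at most $(2m-1)\|T_v\| \le (2n-1)n^k \le 2n^{k+1}$, and since $|F| = |F_{v_1}| + |F_{v_2}| + 1$ (the $*$-gates and input tensors of $F$ are those of $F_{v_1}$ and $F_{v_2}$ together with the root), we get $|C| \le |C_1| + |C_2| + 2n^{k+1} \le 2n^{k+1}|F|$.

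The main — and fairly mild — obstacle is the bookkeeping that keeps $C$ multiplicatively disjoint across the recursion: this is exactly why the child circuits are taken vertex-disjoint, so that each new multiplication gate draws its two inputs from disjoint subcircuits. The other point to watch is the uniform bound $\|T_v\|\le n^k$ on all intermediate tensors, which is what lets the crude per-entry count close with the stated constant; and one should note that the degenerate cases in which $T_{v_1}$ or $T_{v_2}$ is a vector (so $p=1$ or $q=1$, possibly making $T_v$ a scalar) need no separate treatment, as the same contraction formula and count apply.
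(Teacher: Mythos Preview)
Your proof is correct and follows essentially the same approach as the paper: a structural induction on $F$, building circuits for the two subformulas, taking their disjoint union, and adding at most $2n$ gates per output entry (of which there are at most $n^k$) to compute the contraction, with multiplicative disjointness guaranteed because each new $\times$-gate draws its inputs from the two disjoint subcircuits. Your write-up is in fact slightly more careful than the paper's, since you explicitly justify the bound $\|T_v\|\le n^k$ on intermediate tensors and make the vertex-disjointness of $C_1,C_2$ explicit.
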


\begin{proof}

If $F$ is an input, let $C$ be the circuit having $\prod_{i=1}^l n_i$
inputs, each one labeled with an entry of $F$. The size of $C$ is
$\prod_{i=1}^l n_i \leq n^k$.

If $F = G * H$, by induction we have circuits $C_G$ and $C_H$ with the
desired properties for~$G$ and~$H$. The dimension of $F$ is less than
$k$ and for $\e \in D(F)$, $F[\e] = \sum_{i=1}^m G[\e_1, i]H[i, \e_2]$
with $m \leq n$.

Each $G[\e_1,i]$ and $H[i,\e_2]$ is computed by a gate of $C_G$ and
$C_H$,respectively, so we can compute $F[\e]$ by adding at most $2n$
gates ($m$ $\times$-gates and $m-1$ $+$-gates). As there are at most
$n^k$ entries in $F$, we can compute all of them with a circuit $C$ by adding at
most $2n \times n^k$ gates to $C_H \cup C_G$.

The circuit $C$ is multiplicatively disjoint since each $\times$-gate
receives one of its input from $C_G$ and the other one from $C_H$.
Also $|C| = |C_G| + |C_H| +
2n^{k+1} \leq 2n^{k+1}|F|$.
\end{proof}

\begin{corollary}
\label{formvp}
Let $(F_n)$ be a family of $\sform$s of polynomial size and of maximal
dimension $k$ computing a family $(f_n)$ of polynomials.
Then $(f_n)$ is in $\VP$.
\end{corollary}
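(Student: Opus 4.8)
The plan is to derive Corollary~\ref{formvp} directly from the Proposition that immediately precedes it, so the main content is bookkeeping rather than a new idea. Fix a family $(F_n)$ of $\sform$s of maximal dimension $k$ with $|F_n| \le P(n)$ for some polynomial $P$, computing the family $(f_n)$. Recall that by our conventions the output tensor of each $F_n$ is a scalar, so $f_n = F_n[\e]$ for the unique index $\e$ in the (empty) domain of $F_n$; hence it suffices to produce, for each $n$, a multiplicatively disjoint circuit of polynomial size that has a gate computing this scalar.

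First I would bound the parameter $n$ appearing in the Proposition. Writing $N := \max_{T : T \text{ input of } F_n} \maxorder{T}$, observe that $N \le |F_n| \le P(n)$, since any single entry already contributes $1$ to $\|T\|$ and $\maxorder{T} \le \|T\|$ for an input $T$ of dimension at least $1$; in the degenerate case where all inputs are scalars one may take $N = 1$. Then I would apply the Proposition to $F_n$: it yields a multiplicatively disjoint circuit $C_n$ of size at most $2N^{k+1}|F_n| \le 2P(n)^{k+2}$ such that for every $\e \in D(F_n)$ there is a gate $v_\e$ in $C_n$ computing $F_n[\e]$. Since $k$ is a fixed constant (independent of $n$), the bound $2P(n)^{k+2}$ is itself a polynomial in $n$.

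Finally I would designate the output: take the gate $v_\e$ where $\e$ is the unique element of $D(F_n)$ (the scalar entry), making it the output gate of $C_n$. This gate computes $F_n[\e] = f_n$, and $C_n$ is multiplicatively disjoint of size polynomial in $n$, so by the definition of $\VP$ given in the preliminaries the family $(f_n)$ lies in $\VP$. The only mild subtlety — and the closest thing to an ``obstacle'' — is checking that the maximal dimension $k$ really is treated as a constant here, so that $N^{k+1}$ stays polynomial; this is exactly the hypothesis of the corollary, so there is nothing deep to overcome, and one only has to make sure the exponent does not secretly depend on $n$.
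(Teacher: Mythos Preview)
Your proposal is correct and matches the paper's intended argument: the paper states the corollary without proof, treating it as an immediate consequence of the preceding Proposition, and the bookkeeping you supply (bounding the maximal input order by $|F_n|$ and using that $k$ is a fixed constant so $2N^{k+1}|F_n|$ is polynomial) is exactly what is needed to fill in the details.
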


\subsection{Unbounded maximal dimension}

Since the size of the circuit constructed in the previous section is exponential in $k:=\maxdim(F)$, we cannot apply the results from there directly
if $k$ is not bounded by a constant. Somewhat
surprisingly we will see in this section that one does not gain any
expressivity by letting intermediate dimensions of formulas
grow arbitrarily. Thus bounding $\maxdim(F)$ is not a restriction of the computational power of $\sform$s.

\begin{definition}
A $\sform$ $F$ of dimension $k$ and input dimension $p$ is said to be
\emph{tame} if $\maxdim(F) \le \max(k,p)$.
\end{definition}

\begin{definition}
A $\sform$ $F$ is said to be \emph{totally tame} if each subformula of $F$ is
tame.
\end{definition}

Let us remark again that also in \cite{mck} and \cite{mal} there is a notion of tameness that prevents intermediate results from growing too much during the computation. It turns out that in those papers tameness plays a crucial role: Tame formulas can be evaluated efficiently while general formulas are hard to evaluate in the respective models. We will see that in our setting tameness is not a restriction at all. Indeed, any $\sform$ can be turned into an equivalent totally tame formula without any increase of its size. Thus totally tame and general formulas have the same expressive power in our setting which is a striking difference to the setting from \cite{mck} and \cite{mal}.
We start with the following lemma:

\begin{lemma}
\label{leftdecompose}
Let $F$ be a totally tame formula with $\dim(F)=k$ and input dimension~$p$. 
For all totally tame formulas $E$ of dimension $1$ and input
dimension at most $p$, there exist totally tame formulas $G_r$ and $G_l$ of size $|F *
E| = |E * F|= |F| + |E| +1$ such that $G_r \simeq F * E$ and $G_l \simeq E * F$.
\end{lemma}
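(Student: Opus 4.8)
The plan is to prove the statement for $G_r \simeq F * E$; the case of $G_l \simeq E * F$ is symmetric. First I would dispose of the trivial size bookkeeping: adding one $*$-gate on top of $F$ and $E$ clearly gives a formula of size $|F| + |E| + 1$, and since contraction is just a binary operation on the root this holds for both $F * E$ and $E * F$. The real content is that this naive formula $F * E$ need not be \emph{totally tame} — the root computes a tensor of dimension $k-1$ (since $E$ is a vector), which is fine, but if $k > \max(k,p)$ fails to hold after the contraction, or rather if the root's dimension $k-1$ still leaves no slack, the issue is that $F$ itself may have $\maxdim(F)$ as large as $\max(k,p)$, and placing the contraction with $E$ on top does not by itself violate tameness at the root; the subtlety is entirely internal to how one might need to rebuild $F$. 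So the key realization I expect to use is that, because $E$ is a vector, contracting $F$ with $E$ on the right "reduces the last dimension" and this reduction can be pushed down into $F$: one rewrites $F * E$ by structural induction on $F$, distributing the contraction with $E$ through the topmost $*$-gate of $F$.

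Concretely, I would argue by induction on the structure of $F$. If $F$ is an input (a leaf), then $F * E$ is already a $\sform$ of maximal dimension $\max(\dim(F), 1) \le \max(k,p)$ by tameness of $F$ and the fact that $\dim(F) = k$ here, so it is totally tame and we are done. If $F = A * B$, then by Proposition~\ref{prop:associativity} — which applies precisely when $\dim(B) \ge 2$ — we have $F * E = A * (B * E)$; here $B * E$ has dimension $\dim(B) - 1$, it is a contraction of the totally tame $B$ with the vector $E$, so by the inductive hypothesis $B * E$ has a totally tame realization $G'$ with $|G'| = |B| + |E| + 1$, and then $A * G'$ is the desired formula: its size is $|A| + |G'| + 1 = |A| + |B| + |E| + 2 = |F| + |E| + 1$, and one checks its maximal dimension never exceeds $\max(k,p)$ because every intermediate dimension appearing is bounded by a dimension that already appeared in the totally tame $F$ or equals $\dim(F * E) = k - 1$. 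The remaining case is $\dim(B) = 1$, i.e. $B$ is itself a vector and hence so is $A * B = F$, meaning $k = 1$; then $\dim(F * E)$ — wait, with $F$ a vector and $E$ a vector the contraction $F * E$ is not defined unless we are in the scalar-producing convention, so in this degenerate case $F * E$ is already just the two-leaf formula and tameness is immediate since no dimension exceeds $\max(1,p) = \max(k,p)$.

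I expect the main obstacle to be the careful verification that the rebuilt formula is \emph{totally tame}, not merely tame at the root: one must track that when the contraction with $E$ is distributed through $F$, every subformula encountered — in particular the newly formed $B * E$, $(B_1 * B_2) * E$ reassociated as $B_1 * (B_2 * E)$, and so on — has maximal dimension bounded by $\max(k, p)$. The clean way to handle this is to prove, as part of the induction, the slightly stronger invariant that the rebuilt $G_r$ satisfies $\maxdim(G_r) \le \maxdim(F * E$ computed naively$) = \maxdim(F)$, which together with total tameness of $F$ (giving $\maxdim(F) \le \max(k,p)$) and the fact that the output dimension dropped to $k-1$ yields what we need. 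A secondary point of care is making sure Proposition~\ref{prop:associativity}'s hypothesis $\dim(B) \ge 2$ is always met before invoking it, which is why the case split on $\dim(B) = 1$ versus $\dim(B) \ge 2$ above is essential and must be stated explicitly rather than glossed over.
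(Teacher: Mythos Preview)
Your overall strategy---structural induction on $F$, using Proposition~\ref{prop:associativity} to push the contraction with $E$ down into the right subformula---is exactly the paper's approach, and the size bookkeeping is fine. But your case analysis has two genuine gaps.

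First, in the case $\dim(B)=1$ you assert that ``$B$ is a vector and hence so is $A*B=F$, meaning $k=1$''. This is false: $\dim(A*B)=\dim(A)+\dim(B)-2=\dim(A)-1$, which can be any positive integer. The paper handles this case by arguing that the \emph{unmodified} formula $F*E$ is already totally tame. The key observation you are missing is that tameness of $F$ at its root gives $k_1=\dim(A)\le\max(\dim(F),p)=\max(k_1-1,p)$, which forces $k_1\le p$; hence every dimension appearing anywhere in $F*E$ is already bounded by $p$, and no restructuring is needed.

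Second, your treatment of $\dim(B)\ge 2$ always applies the induction hypothesis to $B$ and $E$, and then sets $G_r=A*G'$. When $\dim(B)=2$ this fails to give total tameness: the resulting $G'\simeq B*E$ is a vector, so $A*G'$ has output dimension $k_1-1$ while its subformula $A$ still has dimension $k_1$; tameness at the root would require $k_1\le\max(k_1-1,p)$, which need not hold (e.g.\ $k_1=5$, $p=3$, with $A$ built from $3$-dimensional inputs). Your proposed strengthened invariant $\maxdim(G_r)\le\maxdim(F)\le\max(k,p)$ is exactly one too weak here, since tameness of $G_r$ demands $\max(k-1,p)$. The paper's fix is to split off $\dim(B)=2$ as its own case: one observes that $B*E$ is itself a totally tame formula of dimension $1$ and input dimension at most $p$, and then applies the induction hypothesis to $A$ (not $B$), with $B*E$ playing the role of the vector $E$. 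Only when $\dim(B)>2$ does the recursion go into $B$ as you describe, and there the check $k-1=k_1+k_2-3\ge\max(k_1,k_2-1)$ goes through.
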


\begin{proof}
We only show the construction of $G_r$; the construction of $G_l$ is completely analogous. We proceed by induction on $F$.

If $F$ is an input, then $\maxdim(F)=\dim(F) = p$. Let $E$ be any totally tame formula of dimension
$1$ and input dimension at most $p$. We set $G_r := F * E$. Clearly, $k=\dim(G_r)= p-1$. Furthermore, $\maxdim(G_r) = \max(p-1, \maxdim(F), \maxdim(E))\le p$ because $E$ has input dimension at most $p$ and is totally tame. Thus $G_r$ is totally tame.

Let now $F = F_1 * F_2$. Let $k_1:=\dim(F_1)$ and $k_2:=\dim(F_2)$. Let $E$ be a
totally tame formula of dimension $1$ and input dimension at most $p$.
\begin{itemize}
\item If $\dim(F_2) = 1$, we claim that $G_r = F * E$ is totally tame. Indeed, since
  $\dim(F_2)=1$, we have $\dim(F)=k_1-1$. But $F$ is by assumption tame, so $k_1 = \dim(F_1) \leq
  \max(k_1-1,p)$. Hence $k_1 \leq p$ and $\dim(F) \leq p$. Thus all intermediate results of $G_r$ have dimension at most $p$, so $G_r$ is tame. But then it is also totally tame, because its subformulas are totally tame by assumption. 
\item If $\dim(F_2) = 2$, we have $p\ge 2$ obviously and $\dim(F)=\dim(F_1)$. $F_2$ is a subformula of~$F$, so it is totally tame, too. Furthermore, $F_2 * E$ is of dimension $1$ and it is also totally tame since $2 \leq p$. Moreover, by Proposition \ref{prop:associativity} we have $F * E \simeq F_1 * (F_2 * E)$. Applying the induction hypothesis on $F_1$ and $(F_2 * E)$ gives the desired $G_r$.
\item If $\dim(F_2) > 2$, by Proposition \ref{prop:associativity} we have $F * E \simeq F_1
  * (F_2 * E)$. We first apply the induction hypothesis on $F_2$ and $E$ to
  construct a totally tame formula $G'$ computing $F_2 * E$. Finally
  $G_r := F_1 * G'$ is totally tame since $F_1$ and $G'$ are totally tame
  and $F$ is of dimension $k = k_1 + k_2 - 2 \geq \max(k_1,k_2-1)$.
\end{itemize}
\end{proof}
We now prove the main proposition of this section:
\begin{proposition}
\label{boundedlength}
For every $\sform$ $F$ there exists a totally tame $\sform$ $F'$ such
that $F' \simeq F$ and $|F| = |F'|$.
\end{proposition}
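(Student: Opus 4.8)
The plan is to prove the statement by structural induction on $F$, maintaining the stronger invariant that the totally tame formula $F'$ we build satisfies $F' \simeq F$, $|F'| = |F|$, and — crucially — $\dim(F') = \dim(F)$ and $F'$ has the same input dimension as $F$ (since it is built from the same leaves). The base case, $F$ an input, is trivial: an input formula is already totally tame, so set $F' := F$. For the inductive step, write $F = F_1 * F_2$, apply the induction hypothesis to $F_1$ and $F_2$ to get totally tame equivalents $F_1'$ and $F_2'$, and try to combine them. The difficulty is that $F_1' * F_2'$ need not be tame: the contraction of two tensors of dimensions $k_1$ and $k_2$ has dimension $k_1 + k_2 - 2$, which can exceed $\max(\dim(F), p)$ at the top gate even when both subformulas are tame, for instance if $F_1$ has large dimension but $F$ does not because $F_2$ is a vector.

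\textbf{Main case analysis.} The key is a case split on $\dim(F_2)$, exactly mirroring Lemma~\ref{leftdecompose}. If $\dim(F_2) \ge 2$, then $\dim(F) = \dim(F_1) + \dim(F_2) - 2 \ge \max(\dim(F_1), \dim(F_2))$, so the top contraction gate of $F_1' * F_2'$ already has dimension $\le \dim(F)$, and since $F_1'$ and $F_2'$ are totally tame (and have the same dimensions and input dimension as $F_1, F_2$, hence no worse than those allowed in $F$), the formula $F' := F_1' * F_2'$ is totally tame; clearly $|F'| = |F_1'| + |F_2'| + 1 = |F_1| + |F_2| + 1 = |F|$. The hard case is $\dim(F_2) = 1$. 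Then $\dim(F) = \dim(F_1) - 1$, and $\dim(F_1)$ could in principle be huge, so $F_1' * F_2'$ is typically not tame. Here the idea is to push the vector $F_2'$ down into $F_1'$ using associativity (Proposition~\ref{prop:associativity}) and the vector-absorption Lemma~\ref{leftdecompose}. Concretely: $F_1$ is itself a subformula of $F$, hence tame, which forces $\dim(F_1) \le \max(\dim(F_1) - 1, p)$, i.e. $\dim(F_1) \le p$; so in this case $\dim(F_1)$ is already bounded by the input dimension. But we still need every \emph{internal} gate of the formula for $F$ to have small dimension, which is not automatic from $F_1'$ being totally tame — its internal gates may have dimension up to $p$, which is fine, but the issue is whether $F' = F_1' * F_2'$ adds nothing worse: its top gate has dimension $\dim(F_1) - 1 \le p$, and $F_2'$ has dimension $1 \le p$. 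Wait — in fact this shows $F' := F_1' * F_2'$ \emph{is} already totally tame when $\dim(F_2) = 1$, since the new top gate has dimension $\le p - 1$. So actually no rewriting via Lemma~\ref{leftdecompose} is needed in this sub-case either, provided one is careful that $F_1'$ inherits input dimension $\le p$.

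\textbf{Where the real obstacle lies.} Reexamining, the genuinely problematic configuration is when $F = F_1 * F_2$ with $\dim(F_2) = 1$ but $F$ itself is \emph{not} assumed tame — wait, but Proposition~\ref{boundedlength} makes no tameness hypothesis on $F$ at all, so I cannot invoke "$F$ is tame" to conclude $\dim(F_1) \le p$. This is the crux. So the correct strategy must be: do not assume $F$ tame; instead, when $\dim(F_2) = 1$, rewrite. By induction produce totally tame $F_1' \simeq F_1$ and $F_2' \simeq F_2$ with $F_2'$ of dimension $1$ and input dimension $\le p$. Then apply Lemma~\ref{leftdecompose} with "$F$" $:= F_1'$ and "$E$" $:= F_2'$ to obtain a totally tame $G_r \simeq F_1' * F_2' \simeq F_1 * F_2 = F$ with $|G_r| = |F_1'| + |F_2'| + 1 = |F|$; set $F' := G_r$. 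When $\dim(F_2) \ge 2$, set $F' := F_1' * F_2'$ and argue tameness from $\dim(F) \ge \max(\dim(F_1), \dim(F_2))$ as above, noting also $\dim(F_2) = 2 \Rightarrow$ handled since $\dim(F_1) = \dim(F) \ge 2 > $ fine, and $\dim(F_2) > 2$ likewise. The main obstacle, then, is precisely the bookkeeping that Lemma~\ref{leftdecompose} requires its input $F_1'$ to be totally tame (supplied by induction) and its "$E$" to have input dimension $\le p$ where $p$ is the input dimension of $F_1'$ — one must check $F_2'$'s input dimension is $\le$ that of $F_1'$, or more carefully track a common bound; this is where I expect the fiddly part of the argument to sit, and it should be dispatched by carrying "input dimension $\le p$, where $p$ is the input dimension of the \emph{whole original} $F$" as part of the induction invariant.
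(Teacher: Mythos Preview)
Your approach is essentially the paper's: induct on $F$, replace $F_1, F_2$ by totally tame $F_1', F_2'$, set $F' := F_1' * F_2'$ in the easy case, and invoke Lemma~\ref{leftdecompose} in the hard case. However, your case split is asymmetric and this creates a genuine gap. You split only on whether $\dim(F_2) = 1$, but you must also treat $\dim(F_1) = 1$ separately. Your claim ``if $\dim(F_2) \ge 2$ then $\dim(F) = k_1 + k_2 - 2 \ge \max(k_1, k_2)$'' is false when $k_1 = 1$: then $\dim(F) = k_2 - 1 < k_2$, and $F_1' * F_2'$ need not be tame. Concretely, take $F_1$ a vector and $F_2'$ a totally tame formula of dimension $10$ and input dimension $3$ (obtainable by iterated contraction of $3$-tensors); then $F_1' * F_2'$ contains a gate of dimension $10$ but has output dimension $9$ and input dimension $3$, so it is not tame. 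The paper splits into ``$k_1 \ge 2$ and $k_2 \ge 2$'' versus ``$k_1 = 1$ or $k_2 = 1$'', and in the latter case uses \emph{both} the $G_r$ and the $G_l$ halves of Lemma~\ref{leftdecompose}; you only ever use $G_r$.

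Your worry about the input-dimension hypothesis of Lemma~\ref{leftdecompose} (that $E$ have input dimension at most that of $F$) is a legitimate technical point --- the paper itself applies the lemma in this proof without checking that hypothesis. The resolution is that the hypothesis is not actually needed: rerunning the lemma's induction with $E$ of arbitrary input dimension $q$ still yields a totally tame $G_r$, now with input dimension $\max(p,q)$, and every tameness check in the proof goes through with $p$ replaced by $\max(p,q)$. So your proposed ``carry a global $p$'' bookkeeping is unnecessary; once the missing $k_1 = 1$ case is added, the argument is complete.
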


\begin{proof}
Proof by induction on $F$.

If $F$ is an input then it is trivially totally tame as the
dimension of $F$ is equal to the input dimension of $F$. So we set 
$F' := F$.

If $F = F_1 * F_2$ then several cases can occur depending on the
dimension of $F_1$ and $F_2$. We denote by $k,k_1,k_2$ the dimensions
of $F$, $F_1$ and $F_2$ respectively. We recall that $k = k_1 + k_2 -
2$.
\begin{itemize}
\item If both $k_1$ and $k_2$ are different from $1$. Then $F' = F_1' *
  F_2'$ is totally tame since $k \geq \max(k_1,k_2)$
\item If $k_1=1$ or $k_2 = 1$, we use Lemma $\ref{leftdecompose}$ on $F_1'$ and
  $F_2'$ to construct $F'$ of size $|F|$, totally tame, computing $F_1
  * F_2$.
\end{itemize}
\end{proof}

Combining Proposition \ref{boundedlength} and Corollary \ref{formvp} we get the following theorem:

\begin{theorem}
\label{vpgen}
Let $(F_n)$ be a family of $\sform$s of polynomial size and input
dimension $p$ (independent of $n$) computing a family of polynomials $(f_n)$. Then  $(f_n)$ is in $\VP$.
\end{theorem}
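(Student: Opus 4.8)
The plan is to derive Theorem~\ref{vpgen} by chaining together the two main results already established in this section. Given a family $(F_n)$ of $\sform$s of polynomial size and bounded input dimension $p$ computing $(f_n)$, I would first invoke Proposition~\ref{boundedlength} to replace each $F_n$ by an equivalent totally tame $\sform$ $F_n'$ with $|F_n'| = |F_n|$. Since $F_n'$ is totally tame, its maximal dimension satisfies $\maxdim(F_n') \le \max(\dim(F_n'), p)$; as the output is a scalar we have $\dim(F_n') = 0$ (or we may simply note $\dim(F_n') \le p$ directly from tameness applied at the root), so $\maxdim(F_n') \le p$, a constant independent of $n$. Thus $(F_n')$ is a family of $\sform$s of polynomial size and maximal dimension bounded by the constant $p$.

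Second, I would apply Corollary~\ref{formvp} to the family $(F_n')$: since it has polynomial size and maximal dimension bounded by the constant $k := p$, and it computes $(f_n)$ (because $F_n' \simeq F_n$), the corollary immediately gives $(f_n) \in \VP$. This completes the proof.

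The only point requiring a moment of care is the bound on $\maxdim(F_n')$. Tameness of $F_n'$ says $\maxdim(F_n') \le \max(k_n, p_n)$ where $k_n = \dim(F_n')$ and $p_n$ is the input dimension of $F_n'$. The input dimension of $F_n'$ is the same as that of $F_n$ (the construction in Proposition~\ref{boundedlength} and Lemma~\ref{leftdecompose} never introduces new input tensors, only rearranges contractions), hence $p_n \le p$. And $\dim(F_n')$ is the dimension of the scalar output tensor, which we have assumed throughout to be $0$, so $k_n$ contributes nothing. Hence $\maxdim(F_n') \le p$ uniformly in $n$, and no step of the argument presents a genuine obstacle — the real work was already done in Proposition~\ref{boundedlength} (the tameness transformation) and Corollary~\ref{formvp} (the circuit simulation).

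\begin{proof}
Let $(F_n)$ be a family of $\sform$s of polynomial size computing $(f_n)$, with input dimension bounded by the constant $p$. By Proposition~\ref{boundedlength}, for each $n$ there is a totally tame $\sform$ $F_n'$ with $F_n' \simeq F_n$ and $|F_n'| = |F_n|$; in particular $F_n'$ still computes $f_n$, has polynomial size, and has input dimension at most $p$ (the transformation does not add new input tensors). Since $F_n'$ is totally tame, applying the tameness condition at its root gives $\maxdim(F_n') \le \max(\dim(F_n'), p)$, and as the output is a scalar this yields $\maxdim(F_n') \le p$. Thus $(F_n')$ is a family of $\sform$s of polynomial size and maximal dimension bounded by the constant $p$. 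By Corollary~\ref{formvp}, $(f_n) \in \VP$.
\end{proof}
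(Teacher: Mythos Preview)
Your proposal is correct and follows essentially the same route as the paper: apply Proposition~\ref{boundedlength} to obtain an equivalent totally tame family $(F_n')$ of the same size, observe that tameness at the root together with the scalar output forces $\maxdim(F_n')\le p$, and then invoke Corollary~\ref{formvp}. The only cosmetic difference is that the paper treats a scalar as having dimension~$1$ rather than~$0$, but since necessarily $p\ge 1$ this does not affect the bound $\maxdim(F_n')\le p$.
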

\begin{proof}
Applying Proposition \ref{boundedlength} on $(F_n)$ gives a family $(F_n')$ computing $(f_n)$ such that $(F_n')$ is tame. Then the maximal dimension of $F_n'$ is $p$ (because $F_n$ is scalar, thus of dimension $1$) and applying Corollary \ref{formvp} proves the claim.
\end{proof}

\subsection{Unbounded input dimension}

While we got rid of the restriction on the maximum dimension of
$\sform$s in the last section, we still have a bound on the dimension
of the inputs in Theorem \ref{vpgen}. In this section we
will show that this bound is not necessary to have containment of the
computed polynomials in $\VP$. We will show that inputs having ``big''
dimension can be computed by polynomial size $\sform$s of input
dimension $3$. We can then use this to eliminate inputs of dimension
more than $3$ in $\sform$s. Applying Theorem \ref{vpgen} we
conclude that the only restriction on $\sform$s that we need to ensure
containment in $\VP$ is the polynomial size bound.

\begin{proposition}
\label{unbounded}
Let $T$ be a $r$-dimensional tensor of order $(\tuple n r)$. Let $L :=
\|T\| = \prod_{i=1}^r n_i$ be the number of entries in $T$. Then there
is a $\sform$ $F$ of size $r+1+L^3+2L$ and input dimension $3$
computing $T$.
\end{proposition}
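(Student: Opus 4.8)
The plan is to encode the high-dimensional tensor $T$ as a long one-dimensional vector and then use a sequence of contractions with small auxiliary tensors to ``unfold'' this vector into the desired $r$-dimensional shape. Concretely, I would first list the $L$ entries of $T$ in some fixed order (say lexicographic on the index tuples) and put them into a single input vector $v$ of order $(L)$; this is one input of size $L$, and it contributes dimension $1$ only. The core of the construction is then to build, for each coordinate $i \in \set{r}$, a small tensor that ``reads off'' the $i$-th coordinate of the flattened index. The natural gadget is a family of three-dimensional ``splitting'' tensors: thinking of an index into the flat vector as a pair (prefix, suffix) with respect to position $i$, one contracts with a tensor $S_i$ of order roughly $(L, n_i, L/\!\!\prod_{j\le i} n_j)$ whose entries are $0/1$ and which is the identity on the part of the index not yet unfolded while peeling off the $n_i$-sized factor. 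Chaining these contractions $F := v * S_1 * S_2 * \cdots$ (bracketed appropriately) produces a tensor of order $(n_1,\dots,n_r)$ whose entry at $(\i_1,\dots,\i_r)$ is exactly $T[\i_1,\dots,\i_r]$, because each contraction just routes the single nonzero entry of $v$ corresponding to that tuple to the right place.

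A cleaner variant, which I suspect matches the stated size bound $r+1+L^3+2L$ more precisely, is to use $0/1$ matrices rather than genuine $3$-tensors for most of the unfolding and reserve one $3$-tensor for the step that actually creates a new dimension. In that version: start from the flat vector $v$ of order $(L)$ (size $L$); contract once with a single $3$-dimensional tensor of order $(L,L,L)$ — this accounts for the $L^3$ term — to produce a matrix-shaped object carrying all entries; then apply $r$ further contractions, each with an input of size at most $L$ (hence the $2L$ and the additive $r$, plus $1$ for something like the final vector or an extra diagonalization step), each one splitting off one of the $n_i$ factors via a $0/1$ permutation/selection tensor whose nonzero pattern is $\delta$-like. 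Every intermediate tensor has dimension at most $3$, and since the inputs all have dimension at most $3$, the formula has input dimension $3$ as required. Counting $*$-gates: one for the $3$-tensor contraction plus $r$ for the unfolding plus a constant, and summing input sizes $1 + L + L^3 + (\text{at most }2L)$ gives the claimed bound.

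To verify correctness I would fix an arbitrary target tuple $\e=(\i_1,\dots,\i_r)$ and compute $F[\e]$ by expanding the definition of contraction: each summation index in the chain is forced to a unique value by the $\delta$-structure of the gadget tensors, so the whole multiple sum collapses to a single term equal to the entry of $v$ indexed by the flat encoding of $\e$, which by construction is $T[\e]$. This is essentially the same bookkeeping as in the proof of the previous lemma (matching $\delta_{i,j}$ factors against the ranges of summation), just with the roles reversed: there one contracts diagonal matrices to keep a single coordinate alive, here one contracts selection tensors to successively split coordinates apart. The main obstacle is purely organizational rather than mathematical: getting the orders of the auxiliary tensors and the bracketing of the contraction chain to line up so that (a) every partial contraction is well-defined (the ``inner dimension'' of consecutive tensors must agree), (b) no intermediate dimension exceeds $3$ — which requires care because contracting two $3$-tensors naively yields dimension $4$, so one must always contract the running object with a gadget on one side and immediately contract away the extra dimension, or phrase the gadgets as $3$-tensors that are ``diagonal'' in one pair of coordinates in the style of the $M$, $M_L$, $M_R$ tensors above — and (c) the total size adds up to exactly $r+1+L^3+2L$. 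I would handle (b) by using exactly the diagonal-$3$-tensor trick from the preceding lemma, so that each unfolding step has the shape $E*(X * M)$ or $(\cdots)*(\cdots)$ with every gate of dimension at most $3$.
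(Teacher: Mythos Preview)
Your first paragraph is essentially the paper's construction. The paper fixes a bijection $B:[L]\to[n_1]\times\cdots\times[n_r]$, defines for each $i$ a $3$-tensor $T_i$ of order $(L,n_i,L)$ that is diagonal in the two outer coordinates and picks out $B_i(m)=k$ in the middle one (with $T_1$ additionally carrying the value $T[B(m)]$), and sets $F=E*T_1*\cdots*T_r*E$ where $E$ is the all-ones vector of length~$L$. This is exactly your chain of ``splitting'' tensors, with the data folded into $T_1$ instead of a separate vector~$v$ and with the outer orders kept at~$L$ rather than shrinking. The correctness argument is the $\delta$-collapsing you describe. So the core idea is right and matches the paper.

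Two things are off, though. First, you are solving a harder problem than stated: the proposition only requires \emph{input} dimension~$3$, not maximal dimension~$3$. In the paper's formula the partial product $T_1*\cdots*T_j$ has dimension $j+2$, and nothing is done about this; keeping intermediate dimensions small is handled separately by Proposition~\ref{boundedlength}. Your obstacle~(b) and the $E*(X*M)$ workaround are therefore unnecessary here.

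Second, your ``cleaner variant'' is broken. Contracting with a vector decreases the dimension by~$1$ and contracting with a matrix leaves it unchanged, so a single $(L,L,L)$-tensor followed by $r$ contractions with inputs of size at most~$L$ (hence of dimension~$1$ or~$2$) can never yield an $r$-dimensional output once $r\ge 3$. Reaching dimension~$r$ forces roughly $r$ contractions with $3$-tensors, which is precisely what the first construction does. The $L^3$ in the stated bound is not the size of one distinguished input; it bounds the total $\sum_i \|T_i\|=L^2\sum_i n_i\le L^3$, the $2L$ is the two copies of~$E$, and $r+1$ is the number of $*$-gates.
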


\begin{proof}[Proof (sketch)]
Choose an arbitrary bijection $B:[L] \rightarrow \set{n_1} \times \ldots \times 
\set{n_r}$. Let furthermore $B_i : \set{L} \rightarrow \set{n_i}$ for
$i \leq r$ be the projection of $B$ onto the $i$-th
coordinate.
We define the $3$-dimensional tensors $T_i$ of order
$(L,n_i,L)$ by

$$ 
T_1[m, k, n] = \left\{
    \begin{array}{ll}
      T[B(m)] \text{ if } m = n \text{ and } B_1(m) = k \\ 0 \text{
        otherwise.}
    \end{array}
\right.
$$ and,  for $2 \leq i \leq r$, 
$$ 
T_i[m, k, n] = \left\{
    \begin{array}{ll}
      1 \text{ if } m = n \text{ and } B_i(m) = k \\ 0 \text{
        otherwise.}
    \end{array}
\right.
$$ 

By induction one can show that for the tensor $P = T_1 * \ldots * T_r$ we have
that 

$$
P[m,\tuple k r, n] = 
\left\{
    \begin{array}{l}
      T[\tuple k r]  \text{ if } m = n \text{ and } B(m) = (\tuple k r) \\ 
      0  \text{ otherwise.}
    \end{array}
\right.
$$

Hence $T = E * P * E$ where $E$
is a vector of order $L$ filled with $1$. The
complete proof is given in the appendix.
\end{proof}

The following theorem is a direct consequence of Proposition \ref{unbounded} and Theorem \ref{vpgen}.
\begin{theorem}\label{mostgeneral}
Let $(F_n)$ be a family of $\sform$s of polynomial
size computing a family of polynomials $(f_n)$. Then $f_n$ is in $\VP$.
\end{theorem}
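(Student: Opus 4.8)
The plan is to reduce Theorem~\ref{mostgeneral} to Theorem~\ref{vpgen} by getting rid of all inputs of large dimension. Let $(F_n)$ be a family of $\sform$s of polynomial size $s(n)$ computing $(f_n)$. For each $n$ I would process $F_n$ leaf by leaf: every leaf is labeled by a tensor $T$ of size $L:=\|T\|\le s(n)$, and since by definition $|F_n|$ is the number of $*$-gates plus $\sum_{T\text{ input of }F_n}\|T\|$, we have in particular $\sum_{T}\|T\|\le s(n)$. For each leaf whose label $T$ has dimension $>3$, apply Proposition~\ref{unbounded} to obtain a $\sform$ $F_T$ of input dimension $3$, of size $r+1+L^3+2L$ (where $r=\dim(T)\le L$), with $F_T\simeq T$. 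Replace that leaf of $F_n$ by the subformula $F_T$; call the resulting $\sform$ $F_n'$. Leaves of dimension at most $3$ are left untouched.

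Next I would argue that this substitution is harmless. Replacing a subformula by an equivalent one does not change the tensor computed at the root — this is a straightforward induction up the formula tree using that $\simeq$ is preserved by $*$ — so $F_n'\simeq F_n$ and hence $F_n'$ still computes $f_n$. The input dimension of $F_n'$ is at most $3$: the untouched leaves have dimension at most $3$ by construction, and each $F_T$ has input dimension exactly $3$. For the size, each replaced leaf of size $L$ is turned into a subformula of size $O(L^3)$, and since $\sum_T\|T\|\le s(n)$ we also have $\sum_T\|T\|^3\le s(n)^3$; therefore $|F_n'|\le |F_n|+O(s(n)^3)$, which is still polynomial in $n$.

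Thus $(F_n')$ is a family of $\sform$s of polynomial size and input dimension bounded by $3$ (a constant independent of $n$) that computes $(f_n)$. Applying Theorem~\ref{vpgen} to $(F_n')$ then gives $(f_n)\in\VP$, which is the claim.

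I do not expect a genuine obstacle: all the real work has been isolated in Proposition~\ref{unbounded} (turning a single big-dimensional input into a polynomial-size, input-dimension-$3$ formula) and in Theorem~\ref{vpgen}. The only point that needs care is the size bookkeeping — specifically that the cubic blow-up incurred per input does not break polynomiality — and this is fine precisely because the \emph{total} size of the inputs of $F_n$ is already at most $|F_n|$, so $\sum_T\|T\|^3\le\bigl(\sum_T\|T\|\bigr)^3$ remains polynomially bounded. The verification that $\simeq$ propagates through the substitution of a leaf is routine and can be left to the reader or dispatched in one line.
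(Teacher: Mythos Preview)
Your proof is correct and follows exactly the paper's approach: the paper states Theorem~\ref{mostgeneral} as a direct consequence of Proposition~\ref{unbounded} and Theorem~\ref{vpgen}, and you have simply spelled out the leaf-by-leaf substitution and the size bookkeeping that the paper leaves implicit. One minor slip worth fixing: the inequality $r=\dim(T)\le L=\|T\|$ need not hold when some $n_i=1$, but this is harmless since the sum of the input dimensions over all leaves of a $\sform$ equals twice the number of $*$-gates plus the output dimension, so $\sum_T\dim(T)=O(s(n))$ and the total size after substitution remains polynomial.
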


\section{The power of contracting with vectors}

In this section we will make a finer examination of where exactly the
additional power originates when going from
iterated matrix product of \cite{malodportier} to tensor
contractions. We will see that this additional expressivity crucially
depends on the possiblity of contracting tensors on more than two of
their dimensions. We will show that when we prevent this
possibility by disallowing contractions with vectors -- which are used
in the proof of Theorem \ref{vpform} to ``collapse'' dimensions not
needed anymore so that we can access other dimensions to contract on -- the
expressivity of $\sform$s drops to that of iterated matrix product.

Observe that we cannot assume that $\sform$s compute scalars in this setting, because we cannot decrease the dimension of the tensors computed by a formula. Also we cannot compute all entries of the output at the same time efficiently, because those might be exponentially many ones. But we will see in the following Propositions that we can compute each individual entry of the output more efficiently than in the general setting where contraction with tensors is allowed.

\begin{proposition}\label{prop:novectors}
Let $F$ be a $\sform$ of order $(\tuple n k)$ whose inputs are all of
dimension at least $2$. Then for all $\e\in D(F)$ there exists a skew arithmetic circuit
$C$ of size at most $2n^3|F|$ where $n := \max_{T\colon T \text{ input of }F}
(\maxorder T)$ computing $F[\e]$.
\end{proposition}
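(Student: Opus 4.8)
## Proof Plan

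The plan is to mimic the structure of the proof of the earlier proposition for formulas of bounded maximal dimension, but to take advantage of the fact that no input is a vector. The key observation is that when we contract two tensors $G \ast H$ where both have dimension at least $2$, the contraction sums over the \emph{last} dimension of $G$ and the \emph{first} dimension of $H$, and crucially, for a fixed output index $\e = (\e_1, \e_2)$ we have $G \ast H[\e_1, \e_2] = \sum_{i=1}^m G[\e_1, i] H[i, \e_2]$. The point is that the multiplication here is between a single entry of $G$ and a single entry of $H$; it is \emph{not} a matrix product producing many entries at once. So if we already have circuit gates computing each relevant entry of $G$ and of $H$, producing the gate for $F[\e]$ costs only $O(n)$ additional gates.

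First I would set up an induction on the structure of $F$ proving the following stronger statement: for every $\sform$ $F$ whose inputs all have dimension at least $2$, and for every $\e \in D(F)$, there is a skew circuit $C_\e$ of size at most $2n^3 |F|$ with a designated gate computing $F[\e]$. For the base case, $F$ is a single input tensor; then $F[\e]$ is a constant or a variable, computed by a single input gate, and $|F| \ge \|T\| \ge 1$, so the size bound holds trivially. For the inductive step, write $F = G \ast H$. Fix $\e = (\e_1, \e_2) \in D(F)$. By the induction hypothesis, for each $i$ in the contracted dimension (of size $m \le n$) there is a skew circuit computing $G[\e_1, i]$ and one computing $H[i, \e_2]$. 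I would now build $C_\e$ by taking the union of these $2m \le 2n$ circuits and adding $m$ multiplication gates (each multiplying $G[\e_1,i]$ by $H[i,\e_2]$) together with $m-1$ addition gates to sum the products.

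The main thing to verify carefully is the size bound and the skewness. For skewness: the only new $\times$-gates are the ones multiplying $G[\e_1,i] \cdot H[i, \e_2]$, and here is where the hypothesis that inputs have dimension at least $2$ matters — actually the subtler point is that to keep the bound $2n^3|F|$ rather than something larger, I cannot simply reuse the induction hypothesis circuits for \emph{all} entries; instead, I observe that the cheap accounting works because I only ever need the $\le 2n$ specific entries $G[\e_1, i]$ and $H[i, \e_2]$ appearing in the sum for this particular $\e$. However, each of those entries comes with its own circuit of size at most $2n^3|G|$ (resp.\ $2n^3|H|$), and naively summing over $i$ gives $2n \cdot 2n^3(|G|+|H|)$, which is too big. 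The fix — and this is the step I expect to be the real obstacle — is to build a \emph{single} circuit $C$ (not one per $\e$) that simultaneously contains gates for all entries $G[\e_1', i']$ needed and all entries $H[i', \e_2']$ needed across all choices; by induction, a single skew circuit of size $2n^3|G|$ already computes \emph{every} entry of $G$ that is ever asked for (this is exactly the form of the statement in the bounded-dimension proposition, where one circuit computes all $F[\e]$), and likewise $2n^3|H|$ for $H$. Then for each of the at most $n^k \le n^3$ output entries $\e$ of $F$ we add at most $2n$ gates, for a total of at most $2n \cdot n^3$ new gates, but since $k \le 3$ is not assumed here — wait, actually $k$ can be large — I would instead phrase it as: for the \emph{single} fixed $\e$ we want, add $\le 2n$ gates to the union $C_G \cup C_H$, getting $|C| \le 2n^3|G| + 2n^3|H| + 2n \le 2n^3(|G| + |H| + 1) = 2n^3|F|$, using $|F| = |G| + |H| + 1$. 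This last inequality $2n \le 2n^3$ holds since $n \ge 1$. The skewness is then immediate: every new $\times$-gate multiplies two gates from disjoint parts, but more importantly each such product has one operand that is an entry of a tensor obtained by a chain of contractions bottoming out at inputs — hmm, that need not be an input gate. Let me reconsider: in fact skewness requires one child of each $\times$-gate to be an input gate of the circuit. The multiplications $G[\e_1,i]\cdot H[i,\e_2]$ do \emph{not} obviously have an input-gate operand. So the genuinely hard part is arguing skewness, and I expect the resolution is to strengthen the induction to produce, for each entry, not an arbitrary skew circuit but one of a special shape (e.g.\ one where the output gate of each entry-subcircuit is reachable via a skew path), or to re-examine how the contraction unfolds so that iterating contractions from the left keeps one factor always ``atomic''; I would work this out by carefully tracking that, since no input is a vector and contractions only ever consume one dimension from each side, the recursion tree of a single entry computation has a left-deep skew structure mirroring the formula, and the $\times$-gates introduced are exactly of the skew form because at each contraction step one of the two contracted entries is itself a product of input labels along a parse-tree-like path.
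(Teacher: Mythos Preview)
Your proposal has a genuine gap, and it is in two places that turn out to have a common fix. First, the inductive hypothesis you settle on --- that a single circuit of size at most $2n^3|G|$ computes \emph{every} entry of $G$ --- is simply false here: the proposition places no bound on $\maxdim(F)$, and when all inputs have dimension $\ge 2$ the intermediate dimensions grow. For instance, contracting $m$ order-$(n,n,n)$ inputs yields a tensor of dimension $m+2$ with $n^{m+2}$ entries, while $|G|\approx m n^3$; no circuit of size $2n^3|G|$ can even have that many output gates. So the size accounting as written does not go through. Second, as you yourself note, the new $\times$-gates multiplying $G[\e_1,i]\cdot H[i,\e_2]$ have no reason to be skew.

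The key idea you are missing --- and you brush against it when you mention ``keeping one factor atomic'' but then wander off toward parse trees --- is to invoke Proposition~\ref{prop:associativity} directly. Since every input has dimension $\ge 2$, every intermediate tensor in $F$ has dimension $\ge 2$, so associativity holds at every node, and $F$ can be rewritten as $A_1 * (A_2 * (\cdots * A_m))$ with each $A_i$ an \emph{input} tensor. Now fix $\e$ and unfold: writing $R_j = A_{j+1} * (\cdots * A_m)$, one has $R_{j-1}[i_{j-1},\text{fixed}] = \sum_{i_j} A_j[i_{j-1},\text{fixed},i_j]\, R_j[i_j,\text{fixed}]$, so at each level only $\le n$ entries of $R_j$ are needed (the first coordinate varies, the rest are determined by $\e$), each computed with $\le 2n$ new gates, and every $\times$-gate has an entry of the input $A_j$ as one child --- hence the circuit is skew. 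This is exactly the paper's argument, and it simultaneously resolves both the size bound and skewness.
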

\begin{proof}
By Proposition \ref{prop:associativity} we can write $F$ as $A_1 *
(A_2 * (A_ 3 * \ldots * A_n))$. The proof then follows easily by
induction: We do the same construction as in Theorem \ref{formvp} but
this time we only have $n^2$ entries and at each $*$-gate, one side is
an input, resulting in a skew circuit.
\end{proof}

The case of Proposition \ref{prop:novectors} exactly corresponds to
the characterization of $\VPw$ by Malod and Portier
\cite{malodportier} by $n$ products of matrices of size $n \times
n$. Thus Proposition \ref{prop:novectors} naturally generalizes this
result and the real new power seen in Theorem \ref{vpform} must come
from the use of vectors in the products.  As we have seen in the proof
of Proposition \ref{prop:novectors} it is crucial that vectors are the
only case which breaks the associativity of Proposition
\ref{prop:associativity}. So what looked like a not very important
edge case in Observation \ref{obs:breakassociativity} plays a
surprisingly important role for the expressivity of $\sform$s.

\section{The $*_{i,j}$ operators}

Our characterization of $\VP$ by $\sform$s contracts on dimension only in a very specific way in the contraction of two tensors: We always only contract on the last dimension of one tensor and the first dimension of the other one. It is thus very natural to ask if this is a restriction of the computational power of the formulas. In this section we will see that it is indeed not. If we allow free choice of the dimensions to contract on during a contraction this does not make the resulting polynomials harder to compute. To formalize this we give the folowing definition of a contraction $*_{i,j}$.

\begin{definition}
Let $T$ be a $k$-dimensional tensor of order $(\tuple n k)$ and $G$ a
$l$-dimensional tensor of order $(\tuple m l)$ with $k,l \geq 1$. When
$n_i = m_j$ for $i \leq k$ and $j \leq l$, we denote by $T *_{i,j} G$ the contraction of $T$ and $G$ on the dimensions $i$ and $j$
the $(k+l-2)$-dimensional tensor of order $(\tuple n {i-1},
\tuple[i+1] n k, \tuple m {j-1}, \tuple[j+1] m l)$ 
defined as $$(T *_{i,j} G)[\e_1,\e_2,\e_3,\e_4] = \sum_{r=1}^{n_i}
T[\e_1,r,\e_2]G[\e_3,r,\e_4]$$ for all $\e_1\in \set{n_1}
\times \ldots \times \set{n_{i-1}}$, $\e_2\in \set{n_{i+1}}
\times \ldots \times \set{n_k}$, $\e_3\in\set{m_1} \times
\ldots \times \set{m_{j-1}}$ and $\e_4\in\set{m_{j+1}}
\times \ldots \times \set{m_l}.$

\emph{$\sijform$s} are defined in complete analogy to $\sform$s.
\end{definition}

It turns out that $\sijform$s cannot compute more than $\sform$s, so the free choice of the dimensions to meld on does not change much.

\begin{theorem}
\label{thm:VPboundeddim}
Let $(F_n)$ be a family of $\sijform$s of polynomial
size computing a family of polynomials $(f_n)$.
Then $(f_n)$ is in $\VP$.\end{theorem}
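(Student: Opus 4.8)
The plan is to mirror, step by step, the proof of Theorem~\ref{mostgeneral}, verifying that each ingredient survives replacing $*$ by the more flexible family $*_{i,j}$. As for $\sform$s, we may assume the output of a $\sijform$ is a scalar (contract away the remaining dimensions with all‑ones vectors). The three steps are: (a) a $\sijform$ of bounded maximal dimension is simulated by a polynomial‑size multiplicatively disjoint circuit; (b) every $\sijform$ can be rewritten into a totally tame one of the same size; (c) the input dimension can be brought down to $3$. Given these, a polynomial‑size $\sijform$ computing $f$ is first made to have input dimension $\le 3$ by (c), then totally tame by (b), hence of maximal dimension $\le 3$ (its output dimension being $0$); step (a) then produces a polynomial‑size multiplicatively disjoint circuit for $f$, so $f\in\VP$, and all bounds are uniform over a family.

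Step (a) copies the opening proposition of Section~\ref{sct:upper} verbatim: for $F=G*_{i,j}H$ each entry $F[\e_1,\e_2,\e_3,\e_4]=\sum_{r=1}^{n_i}G[\e_1,r,\e_2]\,H[\e_3,r,\e_4]$ is still a sum of at most $n:=\max_{T\text{ input of }F}\maxorder T$ products of gate values already available in the circuits for $G$ and $H$; there are at most $n^{\maxdim F}$ entries, multiplicative disjointness is preserved exactly as before, and the size bound $2n^{\maxdim(F)+1}|F|$ is unchanged, so the analogue of Corollary~\ref{formvp} holds. Step (c) is free: Proposition~\ref{unbounded} already builds, for any tensor $T$, a formula of size $\|T\|+1+\|T\|^3+2\|T\|$ and input dimension $3$ computing $T$, using only $*=*_{k,1}$; since $\|T\|\le|F|$ for each input $T$ of $F$, substituting such a sub‑$\sijform$ for every input of dimension $>3$ yields an equivalent $\sijform$ of polynomial size with input dimension $\le 3$.

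The substance is step (b), the analogue of Lemma~\ref{leftdecompose} and Proposition~\ref{boundedlength}. The enabling observation is that $*_{i,j}$ is associative \emph{without restriction}: if in $(F_1*_{a,b}F_2)*_{c,d}H$ the contracted dimension $c$ of $F_1*_{a,b}F_2$ stems from dimension $c'$ of $F_1$, then $(F_1*_{a,b}F_2)*_{c,d}H = (F_1*_{c',d}H)*_{a',b}F_2$, where $a'$ is the position of the old dimension $a$ of $F_1$ inside $F_1*_{c',d}H$, and symmetrically when $c$ stems from $F_2$ or when the contraction is on the left; crucially this identity holds even if $H$ or an $F_i$ is a vector, so the obstruction of Observation~\ref{obs:breakassociativity} — an artefact of $*$ always meaning ``last dimension of the left operand'' — disappears. (Informally, this is also why the pairing of leaf‑dimensions induced by a $\sijform$ is always a forest: every gate joins two disjoint subformulas.) With this in hand one re‑runs the induction of Lemma~\ref{leftdecompose}: to absorb a dimension‑$1$ totally tame $E$ into a totally tame $F=F_1*_{a,b}F_2$, determine which of $F_1,F_2$ carries the dimension that $E$ is to meet, push the $E$‑contraction into that subformula by the associativity identity, and recurse down to a leaf; the case split on $\dim F_2\in\{1,2,>2\}$ and the dimension inequalities certifying tameness of the intermediate formulas are exactly as in the $\sform$ case. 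Proposition~\ref{boundedlength} then follows as before: at $F=F_1*_{i,j}F_2$ with $\dim F_1\neq1\neq\dim F_2$ one takes $F_1'*_{i,j}F_2'$, which is totally tame because $\dim F=\dim F_1+\dim F_2-2\ge\max(\dim F_1,\dim F_2)$, and when $\dim F_1=1$ or $\dim F_2=1$ one applies the lemma.

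I expect the only delicate point to be the positional bookkeeping inside step (b): one must track which dimension of a contracted tensor comes from which operand, so that the re‑associated formula computes literally the same tensor and not merely one differing by a permutation of its free dimensions (harmless at the scalar root, but it has to be handled for the induction in the lemma to be well‑posed). Everything else transfers mechanically from Section~\ref{sct:upper}.
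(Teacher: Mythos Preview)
Your proposal is correct and follows the paper's strategy step for step: bounded maxdim implies $\VP$ exactly as in Corollary~\ref{formvp}; input-dimension reduction via Proposition~\ref{unbounded}; and the taming step via an analogue of Lemma~\ref{leftdecompose} that pushes a vector contraction through a $*_{i,j}$-gate, which the paper records as the explicit four-case identity of Proposition~\ref{prop:ijidentities} and then feeds into the same induction (Lemma~\ref{ijdecompose}, Proposition~\ref{ijboundedlength}). One remark: the identity $(F_1*_{a,b}F_2)*_{c,d}H = (F_1*_{c',d}H)*_{a',b}F_2$ you state is \emph{not} literally true when $\dim H\ge2$---the $F_2$- and $H$-dimension blocks come out in the opposite order---so $*_{i,j}$ is only ``associative without restriction'' up to a permutation of output indices; but since your induction only ever invokes it with $H=E$ a vector, where no permutation arises, the argument is unaffected, exactly as you anticipate in your closing paragraph.
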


The proof of Theorem \ref{thm:VPboundeddim} follows a similar approach as that of Theorem \ref{mostgeneral} and is thus given in the appendix for lack of space. Let us sketch some key steps here: If we bound the maximal
dimension of $\sijform$s by a constant $k$, it is easy to see that the proof of
Theorem \ref{formvp} can be adapted to $\sijform$s in a straightforward way. The main complication is then turning general $\sijform$s into totally tame ones. $*_{i,j}$ is not associative anymore, and this makes a straightforward translation of the proof of Proposition \ref{boundedlength} tricky. These problems can be solved by the observation that the crucial steps in the process of making a formula tame are those where a $\sijform$ is multiplied by a tensor of dimension $1$. But for such contractions we can give explicit formulas for different cases that may occur, so again every $\sijform$ has an equivalent tame $\sijform$. Combining this with Proposition \ref{unbounded} completes the proof.

\section{Conclusion}

We have shown that one can get a robust characterization of $\VP$ by formulas with tensors as input and tensor contraction as the only operation. This generalizes the known characterization of $\VPw$ by iterated matrix product by Malod and Portier \cite{malodportier}. In some aspects the situation in our setting is more subtle, though. We remarked that vectors and in general breaking associativity plays a crucial role if we want to characterize $\VP$. Also, unlike for iterated matrix product we have to make a choice if we take $*_{i,j}$ or $*$ as our basic operation. It is easy to check that that using the equivalent to $*_{i,j}$ for matrix product would merely be transposing the matrix, so it clearly does not change the expressivity of the model. But fortunately also in our setting, the choice of $*_{i,j}$ or $*$ does not influence the complexity of the computed polynomials.

Unfortunately, unlike for iterated matrix product our characterization seemingly does not directly lead to a characterization of $\VP$ by something similar to branching programs. We still think that such a characterization is highly desirable, because the branching program characterization of $\VPw$ has been the source of important insights in arithmetic circuit complexity. Thus we believe  that a similar characterization of $\VP$ might lead to a better understanding of $\VP$, a class that is arguably not very well understood, yet.

Let us quickly discuss several extensions to the results in this paper that we had to leave out for lack of space: First, analyzing the proofs of Section \ref{sct:upper} a little more carefully one can see that our results remain true if one does not measure the size of a tensor as the number of its entries but as the number of its \emph{nonzero} entries. This makes it possible to allow inputs of large dimension and large order.

Also, it seems plausible and straightforward to generalize our results to arbitrary semi-rings in the style of Damm, Holzer and McKenzie \cite{mck}. Choosing different semi-rings one would then probably get characterizations of classes like $\LOGCFL$ and its counting, mod-counting and gap-versions. The main new consideration would be the treatment of uniformity in these settings which appears to be possible with a more refined analysis of our proofs.

Finally, for tensors there are other natural operations to perform on them like addition or tensor product. It is natural to ask, if adding such operations does change the complexity of the resulting polynomials. While it is straightforward to see that adding only tensor product as an operation does not increase the expressivity of $\sform$s, we could so far not answer the corresponding question for addition. Therefore, we leave this as an open question.

\paragraph*{Acknowledgements}

We are very grateful for the very detailed and helpful feedback by Yann Strozecki on an early version of this paper. We would also like to thank Herv\'{e} Fournier, Guillaume Malod and Sylvain Perifel for helpful discussions.

\bibliography{biblio}

\newpage

\begin{appendix}

\section{Proof of proposition \ref{unbounded}}
Choose an arbitrary bijection $B:[L] \rightarrow \set{n_1} \times \ldots \times 
\set{n_r}$. Let furthermore $B_i : \set{L} \rightarrow \set{n_i}$ for
$i \leq r$ be the projection of $B$ onto the $i$-th
coordinate.

We define the $3$-dimensional tensors $T_i$ of order
$(L,n_i,L)$ by

$$ 
T_1[m, k, n] = \left\{
    \begin{array}{ll}
      T[B(m)] \text{ if } m = n \text{ and } B_1(m) = k \\ 0 \text{
        otherwise.}
    \end{array}
\right.
$$ and 
$$ 
T_i[m, k, n] = \left\{
    \begin{array}{ll}
      1 \text{ if } m = n \text{ and } B_i(m) = k \\ 0 \text{
        otherwise.}
    \end{array}
\right.
$$ for $i>1$.

We prove by induction that for the tensor $P_j := T_1 * \ldots * T_j$, $j\le r$, we have
that $P_j[m,\tuple k j, n] = T[\tuple k r]$ if $m = n$ and $B(m) =
(\tuple k j)$ and $P_j[m,\tuple k j, n]=0$ otherwise. 

For $j = 1$ is obvious by definition of $T_1$. So assume that it is true for $j-1$ then
$$
\def\arraystretch{1.5}
\begin{array}{lll}
P_j[m, \tuple k j, n] & = & \sum_{p = 1}^L P_{j-1}[m, \tuple k {j-1},
  p]T_j[p, k_j, n] \\  & = & P_{j-1}[m, \tuple k {j-1}, m]T_j[m, k_j,
  n] \\ & = &
\left\{
    \begin{array}{ll}
      T[B(m)] & \text{ if } m = n \text{ and } B_i(m) = k_i \text{ for } i
      \leq j-1 \\ & \text{ and } B_j(m) = k_j \\ 0 & \text{ otherwise.} 
    \end{array}
\right.
\end{array}
$$ This concludes the induction. 

Thus we have $P_r[m, \tuple k r, n] =
T[\tuple k r]$ if $B(m) = (\tuple k r)$ and $m = n$. We now sum over all $n,m\in [L]$ by $F = E * P_r * E$ where $E$ is a vector of size $L$ containing only $1$s. The resulting $\sform$ is of size $r+1+L^3+2L$
because each $T_i$ is of size $L^2n_i \leq L^3$ and $E$ is of size $L$
and we use $r+1$ $*$-gates.

\section{Proof of Theorem \ref{thm:VPboundeddim}}

The proof of Theorem \ref{thm:VPboundeddim} follows a similar approach as that of Theorem \ref{mostgeneral}. Let us first observe that an analogous version of Theorem \ref{formvp} with $*_{i,j}$ instead of $*$ can be proved easily. Also Proposition \ref{unbounded} applies directly for $*_{i,j}$.

Thus the only thing left to prove is that every $\sijform$ can be turned into a totally tame one. $*_{i,j}$ is not associative which makes a straightforward translation of the proof of Proposition~\ref{boundedlength}~tricky.
Still it is possible observing that for the
crucial case of contraction by a vector, it is possible to
prove identities that we can use where we applied associativity before.

\begin{proposition}\label{prop:ijidentities}
Let $F_1$ and $F_2$ be two tensors of dimension $k_1$ and $k_2$,
respectively. Let $E$ be a tensor of dimension $1$. Then 
$$ (F_1 *_{i,j} F_2) *_{k,1} E = 
\left\{
    \begin{array}{ll}
      F_1 *_{i,j} (F_2 *_{k-k_1+1,1} E) & \text{ if } k_1+j-1 \leq k \\
      F_1 *_{i,j-1} (F_2 *_{k-k_1+1,1} E) & \text{ if } k_1 \leq k \leq k_1+j-2 \\
      (F_1 *_{k,1} E) *_{i,j} F_2  & \text{ if } i \leq k < k_1 \\
      (F_1 *_{k,1} E) *_{i-1,j} F_2  & \text{ if } 1 \leq k < i \\
    \end{array}
\right.
$$
\end{proposition}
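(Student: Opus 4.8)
The plan is to prove all four identities by the same entrywise computation applied case by case: unfold each side with the definition of $*_{i,j}$ into a double sum $\sum_{s}\sum_{r}F_1[\cdots]\,F_2[\cdots]\,E[s]$, where $r$ is the index summed out by the $*_{i,j}$ contraction and $s$ the index summed out by the contraction with the vector $E$, and observe that the two sides yield the identical double sum, merely in the opposite summation order.

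The first step is to pin down the shape of $G:=F_1 *_{i,j} F_2$. By definition $G$ has order $(n_1,\dots,n_{i-1},n_{i+1},\dots,n_{k_1},m_1,\dots,m_{j-1},m_{j+1},\dots,m_{k_2})$, so its dimensions fall into four consecutive blocks: the surviving dimensions of $F_1$ before position $i$, the surviving dimensions of $F_1$ after position $i$, the surviving dimensions of $F_2$ before position $j$, and those of $F_2$ after position $j$. The four inequalities in the statement are precisely the four possibilities for which block the contracted coordinate $k$ of $G$ lies in, so the case distinction is exhaustive; and in each case it determines whether $E$ is really contracted against a dimension of $F_2$ (the cases with $k\ge k_1$) or of $F_1$ (the cases with $k<k_1$), as well as exactly which dimension, namely a fixed shift of $k$ whose size is the number of deleted coordinates lying to its left in $G$.

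The second step, in a fixed case, is to write out the left-hand side as $(G *_{k,1}E)[\mathbf e]=\sum_{s}\bigl(\sum_{r} F_1[\dots,r,\dots]\,F_2[\dots,r,\dots,s,\dots]\bigr)E[s]$, the slot carrying $s$ being the dimension of $F_2$ (resp.\ $F_1$) identified above; the point is that this slot is distinct from the one carrying $r$, so the two contractions do not interfere, which is exactly why the identity survives even though $*_{i,j}$ is not associative in general (Observation \ref{obs:breakassociativity}). Expanding the right-hand side the same way -- first the inner contraction with $E$, which sums out $s$, then the outer contraction with $F_1$ on coordinates $i$ and $j$ (possibly decremented), which sums out $r$ -- produces the same expression after exchanging the two finite sums. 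The outer contraction indices on the right are forced by the fact that removing the coordinate occupied by $s$ shifts every later coordinate down by one.

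I expect the only real difficulty to be the index bookkeeping: keeping track of how coordinate $k$ of $G$ corresponds to a coordinate of $F_1$ or $F_2$, and of whether the contraction indices $i$ and $j$ must be decremented once a coordinate has been deleted. This is routine but error-prone and must be carried out carefully in each of the four cases; there is no conceptual obstacle beyond getting the offsets right.
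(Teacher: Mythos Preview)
Your proposal is correct and follows essentially the same approach as the paper's proof: both unfold the left-hand side entrywise using the definition of $*_{i,j}$ to obtain a double sum over the two contracted indices, then observe that swapping the order of summation and regrouping yields the right-hand side. The paper carries this out explicitly only for the first case and remarks that the remaining cases differ only in the position of the split subtuples, which is exactly your observation that the four cases correspond to the four blocks of coordinates in $F_1 *_{i,j} F_2$ and that the only subtlety is the index bookkeeping.
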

\begin{proof}
The proof follows from simple calculation. We denote by $(\tuple n {k_1})$ and $(\tuple
m {k_2})$ the orders of $F_1$ and $F_2$, respectively. Let $F$ be the
product $F = F_1 *_{i,j} F_2$. We have $$F[\e_1,\e_2,\e_3,\e_4] =
\sum_{r=1}^{n_i} F_1[\e_1,r,\e_2]F_2[\e_3,r,\e_4]$$ where $\e_1$ is a
tuple of length $i-1$, $\e_2$ of length $k_1-i$, $\e_3$ of length
$j-1$ and $\e_4$ a tuple of length $k_2-j$. 

If $k_1+j-1 \leq k$, then
$$
\begin{aligned}
(F * E)[\e_1,\e_2,\e_3, \f, \f'] &= \sum_{p=1}^{m}
  F[\e_1,\e_2,\e_3,\f,p,\f']E[p] \\ &= \sum_{p=1}^{m}
  \sum_{r=1}^{n_i}(F_1[\e_1,r,\e_2]F_2[\e_3,r,\f,p,\f'])E[p] \\ & =
  \sum_{r=1}^{n_i} \sum_{p=1}^m F_1[\e_1, r,
    \e_2](F_2[\e_3,r,\f,p,\f']E[p]) \\ &= F_1 *_{i,j} (F_2 *_{k-k_1+1,1}
  E)
\end{aligned}
$$ where $\f, \f'$ are the suitable subtuples of $\e_4$. The other cases can be checked in the same way. The only
difference is the position of $\f$ and $\f'$.
\end{proof}
Using these identities, we prove the following
lemma.
\begin{lemma}
\label{ijdecompose}
Let $F$ be a totally tame $\sijform$ of dimension $k$ and input
dimension $p$. For all totally tame formula $E$ of dimension $1$ and
for all $i \leq k$, there exists a totally tame $\sijform$ $G$ of size
$|F *_{i,1} E|$ such that $G \simeq F *_{i,1} E$.
\end{lemma}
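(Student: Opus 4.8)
\textbf{Proof plan for Lemma \ref{ijdecompose}.}
The plan is to mirror the structure of the proof of Lemma \ref{leftdecompose}, proceeding by induction on the formula $F$, but using the identities of Proposition \ref{prop:ijidentities} in place of the associativity of $*$. The base case is when $F$ is an input: then $\maxdim(F)=\dim(F)=p$, and since $E$ is totally tame of dimension $1$ with input dimension at most $p$ (which we may assume, matching the hypotheses on $E$ in Lemma \ref{leftdecompose}; here the statement should be read with that restriction), the formula $G:=F *_{i,1} E$ has dimension $k=p-1$ and $\maxdim(G)=\max(p-1,\maxdim(F),\maxdim(E))\le p$, so $G$ is totally tame and has the claimed size $|F *_{i,1} E| = |F|+|E|+1$.

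For the inductive step, write $F = F_1 *_{a,b} F_2$ with $\dim(F_1)=k_1$, $\dim(F_2)=k_2$, so $k=k_1+k_2-2$. The key observation, exactly as in the sketch given for Theorem \ref{thm:VPboundeddim}, is that the only problematic situation is when one of $k_1,k_2$ equals $1$ — if both are at least $2$, then $k \ge \max(k_1,k_2)$ and setting $G := G_1 *_{a,b} G_2$, where $G_1,G_2$ are the totally tame formulas obtained by applying the induction hypothesis to push $E$ into the appropriate subformula, keeps everything tame. So assume, say, $\dim(F_2)=1$ (the case $\dim(F_1)=1$ is symmetric, with the third and fourth cases of Proposition \ref{prop:ijidentities} replacing the first two). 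Then $\dim(F)=k_1-1$, and tameness of $F$ forces $k_1 \le \max(k_1-1,p)$, hence $k_1 \le p$ and $\dim(F)\le p$; so every intermediate dimension of $G := F *_{i,1} E$ is at most $p$, $G$ is tame, and since its subformulas are totally tame by hypothesis, $G$ is totally tame. The remaining case is $\dim(F_2)\ge 2$ and $\dim(F_1)=1$, but this cannot occur together with $F$ having a contraction $*_{a,b}$ at the root unless $k_1\ge 2$ as well (a dimension-$1$ tensor can still be contracted, so one in fact handles this inside the symmetric version of the previous paragraph): one uses the third or fourth identity of Proposition \ref{prop:ijidentities}, according to whether $i \le a$ or $i > a$, to rewrite $(F_1 *_{a,b} F_2) *_{i,1} E$ as $(F_1 *_{i',1} E) *_{a',b} F_2$ with suitably shifted indices, applies the induction hypothesis to $F_1 *_{i',1} E$, and then re-contracts with the totally tame $F_2$; since $\dim(F_1)=1$ gives $\dim(F_1 *_{i',1} E)=0$-ish degenerate behaviour this subcase is in fact vacuous or trivial and is absorbed into the main split. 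In all cases the size bookkeeping is immediate: pushing $E$ through never creates or destroys $*$-gates or input entries, so $|G| = |F| + |E| + 1 = |F *_{i,1} E|$.

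The main obstacle, and the reason this needs the auxiliary Proposition \ref{prop:ijidentities} rather than a verbatim copy of Lemma \ref{leftdecompose}, is precisely the failure of associativity for $*_{i,j}$: when $F = F_1 *_{a,b} F_2$ and we want to contract the result with the vector $E$ on dimension $i$, the dimension $i$ of $F$ lives either in the part of the order coming from $F_1$ or in the part coming from $F_2$, and after melding, the index that dimension $i$ corresponds to inside $F_1$ or $F_2$ has been shifted by the earlier contraction (and depends on whether it lies before or after the melded coordinate). The four cases of Proposition \ref{prop:ijidentities} are exactly the four bookkeeping possibilities for where $i$ sits relative to $k_1$ and relative to the melded index $j$, and the content of the lemma is that in each case one can legitimately push $E$ into the correct subformula with the correctly shifted indices. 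Once one trusts that these four identities are correct (they are verified there by a direct reordering of the finite double sum), the induction goes through just as in Lemma \ref{leftdecompose}, and this lemma, combined with the $*_{i,j}$-analogue of Proposition \ref{boundedlength}, the $*_{i,j}$-analogue of Theorem \ref{formvp}, and Proposition \ref{unbounded}, yields Theorem \ref{thm:VPboundeddim}.
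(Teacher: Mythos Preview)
Your inductive step has a genuine gap. You organise the case split according to whether one of $k_1,k_2$ equals $1$, claiming that when both $k_1,k_2\ge 2$ you can simply push $E$ into the appropriate subformula via Proposition~\ref{prop:ijidentities}, apply the induction hypothesis there, and re-contract. But this does \emph{not} always yield a tame formula. Suppose the contraction index $i$ lands in the $F_2$-part and $k_2=2$. Then the totally tame formula $G_2'$ computing $F_2 *_{\cdot,1} E$ has dimension $1$, and your candidate $G = F_1 *_{a',b'} G_2'$ has dimension $k-1 = k_1-1$ while $\dim(F_1)=k_1$; tameness of $G$ would force $k_1 \le \max(k_1-1,p)$, i.e.\ $k_1\le p$, which the hypotheses do not guarantee (take for instance $k_1=5$, $k_2=2$, $p=3$: here $F$ is totally tame with $\maxdim(F)\le 5$, but your $G$ has $\maxdim(G)\ge 5 > \max(4,3)$). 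You have in effect transplanted the case split of Proposition~\ref{ijboundedlength} into the proof of this lemma, but the two are not interchangeable.

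The paper's proof is structured differently: first it splits on whether the contraction index lies in the $F_1$-block or the $F_2$-block (this is what selects among the four identities of Proposition~\ref{prop:ijidentities}), and \emph{then} it splits on whether the target subformula has dimension $2$ or dimension ${>}2$ (dimension $1$ is shown to be impossible in each branch). When the target subformula, say $F_2$, has dimension exactly $2$, one observes that $F_2 *_{\cdot,1} E$ is itself a totally tame formula of dimension $1$, and then applies the induction hypothesis \emph{again}, this time to $F_1$ with this new vector playing the role of $E$. This nested use of the induction hypothesis---exactly parallel to the $\dim(F_2)=2$ sub-case in Lemma~\ref{leftdecompose}---is what your argument is missing. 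Your handling of the case $\dim(F_2)=1$ is correct, and your discussion of why Proposition~\ref{prop:ijidentities} replaces associativity is on point; but the paragraph about ``$\dim(F_1)=1$ \ldots $0$-ish degenerate behaviour \ldots vacuous or trivial'' is muddled (a dimension-$1$ left factor is perfectly legal) and does not substitute for the missing $k_j=2$ analysis.
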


\begin{proof}
The proof is done by induction on $F$.

If $F$ is an input, let $E$ be any totally tame formula of dimension
$1$ and $i \leq k$. Let $G = F *_{i,1} E$. Then $G$ is obviously totally
tame since $F$ and $E$ are totally tame and $G$ is of dimension $p-1$
so all the intermediate tensors are of dimension at most $p$.

If $F = F_1 *_{i,j} F_2$, then several cases can occur. Let $E$ be any
totally tame formula of dimension $1$ and $l \leq k$. We denote by
$k_1$ and $k_2$ the dimensions of $F_1$ and $F_2$ respectively. We
want to compute $F *_{l,1} E$. We proceed differently depending on $l$.

If $k_1 \leq l$, then we know, using the identities from Proposition \ref{prop:ijidentities} that
either $F *_{l,1} E \simeq F_1 *_{i,j} (F_2 *_{l-k_1+1,1} E)$ or $F
*_{l,1} E \simeq F_1 *_{i,j-1} (F_2 *_{l-k_1+1,1} E)$. So in both
cases, there exists $j'$ such that $F *_{l,1} E \simeq F_1 *_{i,j'}
(F_2 *_{l-k_1+1,1} E)$. Depending on the dimension of $F_2$, we again consider different cases:
\begin{itemize}
\item First of all, the case $\dim(F_2) = 1$ is not possible because
  if $k_2 = 1$ then $k = k_1 - 1$ and then necessarely $l < k_1$
\item If $\dim(F_2) = 2$ then $F_2 *_{l-k_1+1,1} E$ is a formula of
  dimension $1$. Moreover, it is totally tame because $F_2$ is totally
  tame since $F$ is, $E$ is totally tame by assumption and $\dim(F_2 *_{l-k_1+1,1} E)=1<p$. Thus we can apply the induction hypothesis on $F_1$ and the vector $F_2
  *_{l-k_1+1,1} E$
\item If $\dim(F_2) > 2$ then we first construct $G'$ by applying the
  induction on $F_2$ and $E$. The formula $G = F_1 *_{i,j'} G'$
  computes $F$ and is totally tame because both $F_1$ and $G'$ are
  totally tame and that $\dim(F) = k_1 + k_2 - 2 \geq \max(k_1,
  k_2-1)$ since $k_2 > 2$
\end{itemize}
This completes the case $k_1 \leq l$. 

We proceed similarly for the case $l < k_1$ using the other identities from Poposition \ref{prop:ijidentities}. In this case we have that there exists $i'$ such that $F *_{l,1}
E \simeq (F_1 *_{l,1} E) *_{i',j} F_2$. Again, we analyse depending on the
dimension of $F_1$.
\begin{itemize}
\item Again, since $l < k_1$, we have $\dim(F_1) \neq 1$.
\item If $\dim(F_1) = 2$ then $(F_1 *_{l,1} E)$ is a formula of
  dimension $1$. So $i'=1$ and thus $F \simeq F_2 *_{j,1} (F_1 *_{l,1}
  E)$. As before, $F_1 *_{l,1} E$ is totally tame and we apply the
  induction on $F_2$ and $F_1 *_{l,1} E$
\item If $\dim(F_1) > 2$ then first construct $G'$ by applying the
  induction on $F_1$ and $E$. The formula $G = G' *_{i',j} F_2$
  computes $F$ and is totally tame because both $F_2$ and $G'$ are
  totally tame and that $\dim(F) = k_1 + k_2 - 2 \geq \max(k_1,
  k_2-1)$ since $k_2 > 2$.
\end{itemize} 
\end{proof}

We now prove that $\sijform$s can always be turned into equivalent totally tame $\sijform$s.
\begin{proposition}
\label{ijboundedlength}
Let $F$ be an $\sijform$. Then there exists a totally tame $\sijform$~$F'$ such
that~$F' \simeq F$ and $|F| = |F'|$.
\end{proposition}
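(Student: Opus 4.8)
The plan is to mirror the proof of Proposition~\ref{boundedlength}, replacing its appeals to associativity (Proposition~\ref{prop:associativity}) by appeals to Lemma~\ref{ijdecompose} whenever one of the two factors of a contraction has dimension~$1$. I would argue by induction on the structure of $F$. If $F$ is an input, then $\dim(F)$ equals the input dimension of $F$, so $F$ is already totally tame and I set $F' := F$.

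For the inductive step, write $F = F_1 *_{i,j} F_2$, put $k_1 := \dim(F_1)$ and $k_2 := \dim(F_2)$, so that $\dim(F) = k_1 + k_2 - 2$, and let $F_1' \simeq F_1$ and $F_2' \simeq F_2$ be the totally tame formulas given by the induction hypothesis, with $|F_1'| = |F_1|$ and $|F_2'| = |F_2|$. If $k_1 \geq 2$ and $k_2 \geq 2$, then $k_1 + k_2 - 2 \geq \max(k_1, k_2)$, so in $F' := F_1' *_{i,j} F_2'$ every intermediate dimension is at most $\max(\dim(F'), p)$, where $p$ is the input dimension of $F'$; since the proper subformulas of $F'$ are exactly those of $F_1'$ and $F_2'$, and hence tame, $F'$ is totally tame, and $|F'| = |F_1'| + |F_2'| + 1 = |F|$. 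If $k_2 = 1$, then $F_2'$ is a vector and $j = 1$, so I apply Lemma~\ref{ijdecompose} to $F_1'$ and the vector $F_2'$ to obtain a totally tame $F' \simeq F_1' *_{i,1} F_2' \simeq F$ of size $|F_1'| + |F_2'| + 1 = |F|$. If $k_1 = 1$, then $F_1'$ is a vector and $i = 1$; here I would first note, straight from the definition of $*_{i,j}$, that $F_1' *_{1,j} F_2' \simeq F_2' *_{j,1} F_1'$ --- contracting a vector into the $j$-th dimension of $F_2'$ yields the same tensor whether the vector sits on the left or on the right --- and then apply Lemma~\ref{ijdecompose} to $F_2'$ and the vector $F_1'$.

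The one genuine subtlety --- the point the surrounding text already flags --- is that $*_{i,j}$ is not associative, so the maneuvers used in Proposition~\ref{boundedlength} to strip off a vector do not transcribe verbatim; this is exactly what Lemma~\ref{ijdecompose}, built on the explicit case identities of Proposition~\ref{prop:ijidentities}, is designed to absorb. Once that lemma is available the argument above is routine, the only remaining bookkeeping being the size identity $|G *_{i,j} H| = |G| + |H| + 1$, which ensures that neither $F_1' *_{i,j} F_2'$ nor the formula produced by Lemma~\ref{ijdecompose} exceeds the size of $F$.
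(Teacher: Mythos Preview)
Your proposal is correct and follows essentially the same approach as the paper's own proof: induction on $F$, with the case $k_1,k_2\ge 2$ handled directly because $k\ge\max(k_1,k_2)$, and the vector cases dispatched via Lemma~\ref{ijdecompose}. If anything you are more careful than the paper in the $k_1=1$ case, where you make explicit the swap $F_1' *_{1,j} F_2' \simeq F_2' *_{j,1} F_1'$ needed to put the contraction in the form required by Lemma~\ref{ijdecompose}; the paper simply says ``use Lemma~\ref{ijdecompose}'' without spelling this out.
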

\begin{proof}
The proof is done by straightforward induction on $F$.

If $F$ is an input then it is trivially totally tame as the
dimension of $F$ is equal to the input dimension of $F$. We simply set $F' := F$.

If $F = F_1 *_{i,j} F_2$ then two cases can occur depending on
the dimension of $F_1$ and $F_2$. We denote by $k,k_1,k_2$ the
dimensions of $F$, $F_1$ and $F_2$, respectively. We recall that $k =
k_1 + k_2 - 2$.
\begin{itemize}
\item If both $k_1$ and $k_2$ are greater than $1$, then $F' = F_1'
  *_{i,j} F_2'$ is totally tame since $k \geq \max(k_1,k_2)$.
\item If $k_2 = 1$ or $k_1 = 1$, we use Lemma $\ref{ijdecompose}$ on
  $F_1'$ and $F_2'$ to construct $F'$ of size $|F|$, totally tame,
  computing $F_1 * F_2$.
\end{itemize}
\end{proof}

This completes the proof of Theorem \ref{thm:VPboundeddim}.
\end{appendix}
\end{document}